\newtheorem{definition}{Definition}[section]
\newtheorem{assumption}[definition]{Assumption}
\newtheorem{theorem}[definition]{Theorem}
\newtheorem{lemma}[definition]{Lemma}
\numberwithin{equation}{section}
\begin{document}

\title{Analysis of optimal portfolio on finite and small time horizons for a stochastic volatility market model }
\author{ Minglian Lin\footnote{Email: minglian.lin@ndsu.edu} \quad and \quad Indranil SenGupta\footnote{Email: indranil.sengupta@ndsu.edu} \\ Department of Mathematics \\ North Dakota State University}

\date{\today}

\maketitle

\begin{abstract}
In this paper, we consider the portfolio optimization problem in a financial market under a general utility function. Empirical results suggest that if a significant market fluctuation occurs, invested wealth tends to have a notable change from its current value. We consider an incomplete stochastic volatility market model, that is driven by both a Brownian motion and a jump process. At first, we obtain a closed-form formula for an approximation to the optimal portfolio in a small-time horizon. This is obtained by finding the associated Hamilton-Jacobi-Bellman integro-differential equation and then approximating the value function by constructing appropriate super-solution and sub-solution. It is shown that the true value function can be obtained by sandwiching the constructed super-solution and sub-solution. We also prove the accuracy of the approximation formulas. Finally, we provide a procedure for generating a close-to-optimal portfolio for a finite time horizon.

\end{abstract}
\textsc{Key Words:} L\'evy processes, Portfolio optimization, Hamilton-Jacobi-Bellman equation, Quantitative finance, Utility function.  \\

\noindent \textsc{AMS subject classifications:} 91G10, 93E20, 60G51. 
	
\section{Introduction}

The problem of portfolio optimization is of paramount importance in financial industries. This problem is particularly interesting if the corresponding process is highly volatile. If the wealth of a portfolio is $x$ at an initial time $t$, the usual goal is to invest in such a way that maximizes the expected utility of wealth at the time horizon $T$. Mathematically, if $U_T$ is a function modeling the investor's utility of wealth at time $T$, then the goal is typically to find a portfolio $\pi$, that maximizes $\mathbb{E}(U_T| \mathcal{F}_t)$, where $\mathcal{F}_t$ represents the $\sigma$-algebra of information up to the initial time $t$.

In the pioneering paper \cite{Merton}, Merton examines the combined problem of optimal portfolio selection and consumption rules for an individual in a continuous-time model where the investor's income is generated by stochastic returns on assets. The optimality equations for a multi-asset problem are derived when the rate of returns is generated by a standard Brownian motion. The work is further developed in Merton's following paper \cite{Merton2}, which extends the previous results for more general utility functions, price behavior assumptions, and for income generated also from noncapital gains sources.

The procedure is developed in a great number of other works in literature. For example, in the paper \cite{Sircar1}, the finite horizon Merton portfolio optimization problem is studied in a general local-stochastic volatility setting. It is shown that the zeroth-order approximation of the value function and optimal investment strategy correspond to those obtained in \cite{Merton}, when the risky asset follows a geometric Brownian motion. In addition to the finite time horizon, a similar problem is studied under the assumption of an infinite time horizon (see \cite{Pang1, Pang2}).  Usually, the markets are incomplete, which stems from the presence of a stochastic factor that affects the dynamics of the correlated stock price. In \cite{Nadtochiy}, the authors provide an approximation scheme for the maximal expected utility and optimal investment policies for the portfolio choice problem in an incomplete market. The analysis is developed on a splitting of the Hamilton-Jacobi-Bellman (HJB) equation in two sub-equations.

For an incomplete market, it is challenging to obtain an explicit formula for the optimal portfolio. Some attempts have been made in that direction. However, most of such approaches involve some strong assumptions. For example, the paper \cite{Zariphopoulou} studies a class of stochastic optimization models of expected utility in markets with randomly changing investment opportunities. Under certain assumptions on the individual preferences, an explicit solution is obtained. It is shown that with an appropriate transformation, the value function is expressible in terms of the solution of a linear parabolic equation. In \cite{Wang}  a continuous-time robust mean-variance model in the jump-diffusion financial market with an intractable claim is considered. More importantly, in that work, the price processes of the assets are driven by both the Brownian motion and Poisson jumps. Once again, under suitable assumptions, an explicit closed-form solution of the robust mean-variance portfolio selection model is obtained.

In order to reduce the number of assumptions involved in obtaining an explicit formula, ``approximation approaches" are popular in the literature. In \cite{Fouque}, the authors study the Merton portfolio optimization problem in the presence of stochastic volatility using asymptotic approximations. The authors derive formal derivations of asymptotic approximations. In addition,  a convergence proof is obtained in the case of power utility and single-factor stochastic volatility. The primary contribution of \cite{Fouque} is to be able to consider the portfolio optimization problem with general utility functions and in the context of incomplete stochastic volatility markets. This is obtained by decomposing volatility into two components - one on a fast timescale and one on a slow timescale. This is in agreement with \cite{Chacko}, which alludes that two volatility factors, one fast and one slow, need to be considered simultaneously.

The incorporation of ``jumps" in portfolio optimization has gained its importance in literature. As observed in \cite{Jin}, if a jump occurs, invested wealth tends to have a significant change from its current value. Such changes are difficult to hedge through continuous rebalancing. Consequently, this leads to potentially large losses for investors with short positions.  The paper \cite{Sahalia} analyzes the consumption-portfolio selection problem of an investor facing both Brownian and jump risks. The study provides a solution in closed form. The closed-form solution provides insights into the structure of the optimal portfolio in the presence of jumps. It is shown that for a single jump term the optimal investment policy can be summarized by an appropriate theorem. The study in \cite{Das} considers a portfolio optimization problem, where the market consisting of several stocks is modeled by a multidimensional jump-diffusion process with age-dependent semi-Markov modulated coefficients. On the finite time horizon, the authors discuss the risk-sensitive portfolio optimization using a probabilistic approach and establish the existence and uniqueness of the classical solution to the corresponding HJB equation.

In \cite{Kumar} this problem is considered in a simple incomplete market and under a general utility function. The authors obtain a closed-form formula for a trading strategy that approximates
the optimal trading strategy. The analysis implements the associated HJB partial differential equation. Also, the associated risky process is assumed to be driven by only Brownian motions. For the small-time horizon, the approximate value function is obtained by constructing classical sub-solution and super-solution to the HJB partial differential equation using a formal expansion in powers of horizon time. The method of super-solution and sub-solution to bind the solution of interest has been recently used to optimize portfolios in the commodity market (see \cite{Roberts1, Roberts2}). In those works, however, the problem is studied from the point of view of the sequential decision-making problem. In \cite{Kumar}, the authors also provide a heuristic scheme for extending the small-time approximating formulas to approximating formulas in a finite time horizon.

In this paper, we obtain a closed-form formula for an approximation to the optimal portfolio in a small-time horizon, where the asset price incorporates stochastic volatility with jumps.
We approximate the value function using the first-order terms of expansion of utility function in the power of time ($t$)  to the horizon ($T$), i.e., $ T - t $.  We control the error of this approximation using the second-order terms of expansion of utility function. After that, we generate a close-to-optimal portfolio near the time to horizon $ (T - t) $ using the first-order approximation of utility function. The error is also controlled by the square of the time to horizon $ (T - t)^2 $. Finally we provide an approximation scheme to the value function for all times $ t \in [0, T] $, and generate the close-to-optimal portfolio on $ [0, T] $.

The rest of the paper proceeds as follows. In Section \ref{sec2}, we provide the stochastic volatility model with jumps that are analyzed in this paper. We also describe model assumptions and utility assumptions in this section.  In Section \ref{sec3}, we prove the main result of this paper. In Section \ref{sec4}, we provide an approximation result of the main result that is described in Section \ref{sec3}. Numerical results are shown in this section. In Section \ref{sec5}, we analyze a close-to-optimal portfolio near the time to horizon $ T-t $ by the first-order approximation of utility function. The value function for all times $ t \in [0,T] $ is approximated in Section \ref{sec6}. A brief conclusion is provided in Section \ref{sec7}. Proofs of some technical lemmas are provided in Appendix \ref{apendix}.

\section{Market model and underlying assumptions}
\label{sec2}

We consider the following incomplete stochastic volatility market model, that is driven by both a Brownian motion and a jump process. We consider two assets: a risky asset and a risk-free asset. For convenience, we denote $ f_t = f(t) $, a univariate function of time $ t \in [0, T] $, where $ T $ is terminal time.
	We let the unit price of risk-free asset be $ 1 $. Then we define the unit price of risky asset $ S_t $ by
	\begin{align} \label{St}
			dS_t = S_t \Big[ \mu(Y_t)dt + \sigma(Y_t)dW^{(1)}_t + \int_\mathbb{R} \gamma_0(t,\zeta) \tilde{N}(dt,d\zeta) \Big], \quad S_0>0,
	\end{align}
	 where $ \mu(\cdot) $ is the mean rate of return and $ \sigma(\cdot) $ is the volatility. More descriptions on these functions will be provided in Assumption \ref{aiai}. We also define the stochastic factor $ Y_t $ by
	\begin{align} \label{Yt}
		dY_t = b(Y_t)dt + a(Y_t)\Big[\rho dW^{(1)}_t + \sqrt{1-\rho^2}\ dW^{(2)}_t\Big] +  \int_\mathbb{R} \gamma_2(t,\zeta) \tilde{N}(dt,d\zeta), \quad  Y_0 \in \mathbb{R},
	\end{align}
	where $ |\rho| < 1 $, and $ W^{(1)}_t $, $ W^{(2)}_t $ form a standard Brownian motion $ W_t = (W^{(1)}_t, W^{(2)}_t) $ which adapts to natural filtration $ \sigma(W_s: 0 \leq s \leq t)$.
	With the notation in \cite{Oksendal_1}, we denote the compensated jump measure $ \tilde{N}(\cdot) $ of L\'evy process $ \eta_t $ by $\tilde{N}(dt,d\zeta) \equiv N(dt,d\zeta)  -  \nu(d\zeta) \,dt$,
	where $ N(dt,d\zeta) $ is differential jump measure giving the number of jumps through $ dt $ with generic jump size $ d\zeta $, and $ \nu(\cdot) $ is L\'evy measure defined by $ \nu(d\zeta) = \mathbb{E}\big(N([0,1],d\zeta)\big) $ with expectation function $ \mathbb{E} $. The processes $ \gamma_0(t,\zeta) $ and $ \gamma_2(t,\zeta) $, as well as $ \gamma_1(t,\zeta) $ in follow-up definition, are predictable with respect to a filtration generated by the L\'evy process $ \eta_s $, $ s \leq t $.

	We let $ X_t $ be total wealth, and let $ \pi(t, X_t, Y_t) $ be the portfolio representing the discounted amount of $ X_t $ invested into risky asset. If $ \pi(t, X_t, Y_t) $ is self-financing, then we define the total wealth $ X_t $ by
	\begin{align} \label{Xt}
		dX_t = \sigma(Y_t) \pi(t, X_t, Y_t) \Big[ \lambda(Y_t)dt + dW^{(1)}_t + \int_\mathbb{R} \gamma_1(t,\zeta) \tilde{N}(dt,d\zeta) \Big], \quad X_0>0.
	\end{align}
Also, we denote the market price of risk by $ \lambda (Y_t) = \frac{\mu(Y_t) - R}{ \sigma(Y_t) }$, 
	where $ R $ is the constant interest rate of risk-free asset. 
	\begin{assumption} [Model Assumptions]
\label{aiai}
		Let $ C^k(\mathbb{R}) $, $ k \in \mathbb{N} $, be the space of the $ k $-th order continuously differentiable functions. It is assumed that the coefficients in stochastic differential equations \eqref{St} and \eqref{Yt}, as well as market price of risk $ \lambda $, satisfy:
		\begin{enumerate}
			\item $ \mu, \sigma, \gamma_0 \in C(\mathbb{R}) $; $ b \in C^1(\mathbb{R}) $; $ a, \gamma_2, \lambda \in C^2(\mathbb{R}) $;
			\item The $ \sigma $, $ \gamma_0 $, $ a $ and $ \gamma_2 $ are strictly positive;
			\item The $ b, b', a, 1/a, a', a'', \gamma_2, 1/\gamma_2, \partial_t \gamma_2, \partial_{tt} \gamma_2, \lambda, \lambda', \lambda'' $ are absolutely bounded.
		\end{enumerate}
	\end{assumption}
	
	Similar to Assumption 1 in \cite{Nadtochiy}, the above model assumptions ensure that the system of stochastic differential equations (SDEs) \eqref{St} and \eqref{Yt} has a unique strong solution.

	\begin{definition} [Admissible Portfolio] \label{def_1}
		A portfolio $ \pi_t = \pi(t, X_t, Y_t) $ is admissible if
		\begin{enumerate}
			\item it is progressively measurable with respect to natural filtration $ \sigma(W_s: 0 \leq s \leq t)$;
			\item it yields that $ X_t $ is strictly positive;
			\item for $ r >0 $, denoting $ \sigma_t = \sigma(Y_t) $,
			\begin{align*}
				& \mathbb{E} \bigg( \int_0^T \frac{\sigma^2_t \pi^2_t}{X^{2r}_t} dt \bigg) < \infty; \\
				& \mathbb{E} \bigg( \int_0^T \int_\mathbb{R} \Big[ \ln \big( X_t + \sigma_t\pi_t\gamma_1(t,\zeta) \big) - \ln(X_t) \Big]^2 \nu(d\zeta) dt \bigg) < \infty; \\
				& \mathbb{E} \bigg( \int_0^T \int_\mathbb{R} \Big[ \big[ X_t + \sigma_t\pi_t\gamma_1(t,\zeta) \big]^{1-r} - X^{1-r}_t \Big]^2 \nu(d\zeta) dt \bigg) < \infty, \ r \neq 1.
			\end{align*}
		\end{enumerate}
	\end{definition}

Let $ U = U(t,X_t,Y_t) $ be the \emph{utility function}, and $ J = J(t,X_t,Y_t) $ be the \emph{value function}. Over all the admissible portfolios, we define the value function $ J $ by
	\begin{align} \label{J}
		J(t,x,y) = \text{ess}\sup_{\pi(t,x,y)} \mathbb{E}\big( U_{_T}(X_{_T}) \big| X_t = x , Y_t = y \big),
	\end{align}
where the terminal condition $ U(T,x,y) = U(T,x) = U_{_T}(x) $. From the system of SDEs \eqref{Yt} and \eqref{Xt}, we obtain
	\begin{align*}
 \begin{bmatrix} dt \\ dx \\ dy \end{bmatrix}  &= \begin{bmatrix} 1 \\ \sigma(y) \pi(t,x,y) \lambda(y) \\ b(y) \end{bmatrix}dt 
		+ \begin{bmatrix} 0 \\ \sigma(y) \pi(t,x,y) \\ a(y) \rho \end{bmatrix}dW^{(1)}_t 
		+ \begin{bmatrix} 0 \\ 0 \\ a(y) \sqrt{1-\rho^2} \end{bmatrix}dW^{(2)}_t \\
		& \quad \
	 	+ \begin{bmatrix} 0 \\ \sigma(y) \pi(t,x,y) \\ 0 \end{bmatrix} 
	 	\int_\mathbb{R} \gamma_1(t,\zeta) \tilde{N}(dt,d\zeta)
	 	+ \begin{bmatrix} 0 \\ 0 \\ 1 \end{bmatrix} 
	 	\int_\mathbb{R} \gamma_2(t,\zeta) \tilde{N}(dt,d\zeta).
	\end{align*}

Following we summarize the Theorem 3.2 from \cite{Oksendal_1}. This theorem will be utilized for the rest of the paper. 
\begin{theorem} \label{Oksendal}
	\begin{enumerate}
		\item Suppose we can find a function $ \varphi \in \mathcal{C}^2 (\mathbb{R}^n) $ such that
		\begin{enumerate}
			\item $ A_v \varphi(y) + f (y, v) \leq 0 $, for all $ v \in \mathcal{V} $, where $ \mathcal{V} $ is the set of possible control values, and
			\begin{align} \label{generator}
				A_v \varphi(y) =\ 
				& \sum_{i=1}^k b_i(y,v) \frac{\partial \varphi}{\partial y_i}(y) + \frac{1}{2} \sum_{i,j=1}^k (\sigma \sigma^T)_{ij}(y,v) \frac{\partial^2 \varphi}{\partial y_i \partial y_j} \nonumber \\
				& + \sum_m \int_\mathbb{R} \{ \varphi(y+\gamma^{(k)}(y,v,\zeta)) - \varphi(y) - \nabla  \varphi(y) \gamma^{(k)}(y,v,\zeta) \} \nu_k (d\zeta)
			\end{align}
			\item $ \lim_{t\rightarrow \tau_\mathcal{S}} \varphi(Y(t)) = g(Y(\tau_\mathcal{S})) 1_{\{\tau_\mathcal{S}<\infty\}} $
			\item ``Growth conditions."
			\begin{align}
				\mathbb{E}^y\Big[ 
				& |\varphi(Y(\tau))| + \int_0^{\tau_\mathcal{S}} \{ |A \varphi(Y(t))| + |\sigma^T(Y(t))\nabla \varphi(Y(t))|^2 \nonumber \\
				& + \sum_{j=1}^l \int_\mathbb{R} |\varphi(Y(t)+\gamma^{(j)}(Y(t),u(t),\zeta_j)) - \varphi(Y(t))|^2 \nu_j (d\zeta_j)
				\}dt
				\Big] <\infty, \label{growth}
			\end{align}
			for all $ u \in \mathcal{A} $ and all stopping time $ \tau $.
			\item $ \{ \varphi^- (Y(\tau)) \}_{\tau \leq \tau_\mathcal{S}} $ is uniformly integrable for all $ u \in \mathcal{A} $ and $ y \in \mathcal{S} $, where, in general, $ x^- := max\{-x, 0\} $ for $ x \in \mathbb{R} $. Then $\varphi(y) \geq \Phi(y)$. 
		\end{enumerate}
		\item Suppose we for all $ y \in \mathcal{S} $ can find $ v = \widehat{u}(y) $ such that $A_{\widehat{u}(y)} \varphi(y) + f(y, \widehat{u}(y)) = 0$, 
		and $ \widehat{u}(y) $ is an admissible feedback control (Markov control), i.e. $ \widehat{u}(y) $
		means $ \widehat{u}(Y(t)) $. Then $ v = \widehat{u}(y) $ is an optimal control and $\varphi(y) = \Phi(y)$. 
	\end{enumerate}
\end{theorem}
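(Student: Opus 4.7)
The plan is to use a verification argument built on the Itô--Dynkin formula for jump diffusions, applied to the candidate function $\varphi$ along the controlled state process $Y$. First I would fix an arbitrary admissible control $u \in \mathcal{A}$ and a stopping time $\tau \leq \tau_{\mathcal{S}}$, and expand $\varphi(Y(\tau))$ by Itô's formula for semimartingales with jumps:
\begin{align*}
\varphi(Y(\tau)) = \varphi(y) + \int_0^{\tau} A_u \varphi(Y(s))\,ds + M_\tau,
\end{align*}
where $M_\tau$ collects the Brownian integral $\int_0^\tau \sigma^{T}(Y(s))\nabla\varphi(Y(s))\,dW_s$ and the compensated Poisson integrals against $\tilde{N}_j$. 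The square-integrability built into \eqref{growth} is precisely what guarantees that $M$ is a true martingale rather than merely a local martingale, so taking $\mathbb{E}^y$ on both sides kills $M_\tau$ and yields $\mathbb{E}^y[\varphi(Y(\tau))] = \varphi(y) + \mathbb{E}^y\!\int_0^{\tau} A_u \varphi(Y(s))\,ds$.

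For part 1, the supersolution inequality (a) lets me replace $A_u\varphi$ by $-f(Y(s),u(s))$ to obtain
\begin{align*}
\varphi(y) \;\geq\; \mathbb{E}^y[\varphi(Y(\tau))] + \mathbb{E}^y\!\int_0^{\tau} f(Y(s),u(s))\,ds.
\end{align*}
Next I would send $\tau \nearrow \tau_{\mathcal{S}}$ along a localizing sequence. The boundary condition (b) identifies the pointwise limit $\varphi(Y(\tau)) \to g(Y(\tau_{\mathcal{S}}))\,\mathbf{1}_{\{\tau_{\mathcal{S}}<\infty\}}$, the uniform integrability of $\{\varphi^{-}(Y(\tau))\}$ from (d) together with Fatou on the positive part lets me pass the limit through the expectation on the $\varphi$-term, and the absolute integrability in (c) plus dominated convergence handles the $f$-term. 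This gives $\varphi(y) \geq J^u(y)$ for every admissible $u$, and taking the essential supremum produces $\varphi(y) \geq \Phi(y)$.

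For part 2, I would repeat the identical calculation with the Markov feedback $\widehat{u}(y)$, which is admissible by hypothesis. Because $A_{\widehat{u}}\varphi + f(\cdot,\widehat{u}) \equiv 0$, each inequality above becomes an equality, so $\varphi(y) = J^{\widehat{u}}(y) \leq \Phi(y)$. Combined with part 1, this forces $\varphi = \Phi$ and certifies $\widehat{u}$ as optimal.

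The main obstacle is the limit passage $\tau \to \tau_{\mathcal{S}}$ when $\tau_{\mathcal{S}}$ may be unbounded: $\mathbb{E}^y[\varphi(Y(\tau))]$ could blow up, and the jump-martingale integrals against $\tilde{N}_j$ require the $L^2$-bound on $\varphi(Y(t)+\gamma^{(j)})-\varphi(Y(t))$ in \eqref{growth} even to be well defined. These pitfalls are exactly what conditions (c) and (d) are engineered to neutralize, so once Itô's formula for jump diffusions is available (imported from \cite{Oksendal_1}), the argument reduces to careful bookkeeping; the genuinely delicate step is verifying that the negative-part uniform integrability in (d) suffices to exchange limit and expectation without any additional integrability on $\varphi^{+}$.
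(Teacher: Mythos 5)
The paper does not prove this statement at all---it is quoted as Theorem 3.2 of \cite{Oksendal_1} and simply used in what follows---so there is no internal proof to compare against; your verification argument (It\^o--Dynkin formula along the controlled process, martingality of the Brownian and compensated-jump integrals guaranteed by \eqref{growth}, the supersolution inequality to bound $A_u\varphi$ by $-f$, then the passage $\tau\to\tau_{\mathcal{S}}$ handled by Fatou on $\varphi^{+}$ and uniform integrability of $\varphi^{-}$, with part 2 obtained by running the same computation with equality for $\widehat{u}$) is precisely the standard proof given in that reference. The only loose point is attributing the convergence of the $f$-term to condition (c): the inequality $A_u\varphi+f\le 0$ controls $f$ only from above by $|A_u\varphi|$, and the integrability needed to pass to the limit in $\int_0^{\tau}f\,dt$ actually comes from the admissibility requirements on $u\in\mathcal{A}$ (well-posedness of the performance functional), a minor bookkeeping issue that does not affect the structure of the argument.
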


	By \eqref{generator}, we obtain the generator of $ U $ is given by
	\begin{align*}
	 	& \partial_t U + \lambda(y)\sigma(y)\pi(t,x,y)U_x + b(y)U_y \\
	 	& + \frac{1}{2} \sigma^2(y)\pi^2(t,x,y) U_{xx} + \rho a(y)\sigma(y)\pi(t,x,y) U_{xy} + \frac{1}{2}a^2(y)U_{yy} \\
	 	& + \int_\mathbb{R} \Big[ U\big(t, x+ \sigma(y)\pi(t,x,y)\gamma_1(t,\zeta), y+\gamma_2(t,\zeta)\big) - U  \\
	 	& \quad \quad \quad 
	 	- \sigma(y)\pi(t,x,y)\gamma_1(t,\zeta)U_x - \gamma_2(t,\zeta)U_y \Big] \nu(d\zeta).
	\end{align*}
	Consequently, by  Theorem \ref{Oksendal},  we find that $ J(t,x,y) $  is the solution of HJB integro-differential equation
	\begin{align} \label{HJB_0}
	 	& \max_{\pi(t,x,y)} \bigg(
	 	\lambda(y)\sigma(y)\pi(t,x,y)U_x + \frac{1}{2}\sigma^2(y)\pi^2(t,x,y)U_{xx} + \rho a(y)\sigma(y)\pi(t,x,y) U_{xy} \nonumber\\
	 	& \quad \quad \quad \
	 	+ \int_\mathbb{R} \Big[ U\big(t, x+ \sigma(y)\pi(t,x,y)\gamma_1(t,\zeta), y+\gamma_2(t,\zeta)\big)
	 	 - \sigma(y)\pi(t,x,y)\gamma_1(t,\zeta) U_x \Big] \nu(d\zeta) 
	 	\bigg) \nonumber\\
	 	& + \partial_t U  + b(y)U_y + \frac{1}{2}a^2(y)U_{yy} 
	 	- \int_\mathbb{R} \Big[U + \gamma_2(t,\zeta) U_y \Big] \nu(d\zeta)= 0,
	\end{align}
	with terminal condition $ U(T,x,y) = U(T,x) = U_{_T}(x) $. We assume that $ U_{_T}(x) $ has the properties as below.
	\begin{assumption} [Utility Assumptions] \label{Assume_UT}
	 	It is assumed that $ U_{_T}(x) $ is strictly increasing and concave in $ C^5(\mathbb{R}) $, and $ U_{_T}(x) $ behaves asymptotically when $ x $ approaches to both $ 0 $ and $ \infty $, i.e.\ $ U_{_T}(x) $ satisfies either of the following cases:
	 	\begin{enumerate}
	 		\item Logarithmic function: $ U_{_T}(x) = \ln (x) $.
	 		\item Mixture of power functions: $ U_{_T}(x) = \frac{c_1}{1-\alpha}x^{1-\alpha} + \frac{c_2}{1-\beta}x^{1-\beta} $, where $ c_1,c_2 \geq 0 $, $ \alpha, \beta > 0 $, and $ \alpha, \beta \neq 1 $.
	 	\end{enumerate}
	 \end{assumption}
	 
	 Because $ U\big(t, x+ \sigma(y)\pi(t,x,y)\gamma_1(t,\zeta), y+\gamma_2(t,\zeta)\big) $ is an implicit function of $ \pi(t,x,y) $,  we consider to expand $ U $ at terminal time $ T $ using power series: 
	\begin{align} \label{U}
		U(t,x,y) :=  U_{_T}(x) + (T-t)U^{(1)}(x,y) + (T-t)^2U^{(2)}(x,y),
	\end{align}
	where $ U^{(0)}(x,y) = U(T,x,y) = U_{_T}(x) $.

Consequently, we have
	\begin{align} \label{Uxy}
		& U\big(t, x+ \sigma(y)\pi(t,x,y)\gamma_1(t,\zeta), y+\gamma_2(t,\zeta)\big) \nonumber \\
		& = U_{_T}(\chi) + (T-t)U^{(1)}(\chi, \psi) + (T-t)^2U^{(2)}(\chi, \psi),
	\end{align}
	where
	\begin{align}
		\chi & = x+ \sigma(y)\pi_{_T}(x,y)\gamma_{1_T}(\zeta), \label{chi} \\ 
		\psi & = y+\gamma_{2_T}(\zeta). \nonumber
	\end{align}
	Then we substitute \eqref{Uxy} into \eqref{HJB_0}. If the system of SDEs \eqref{Yt} and \eqref{Xt} is Markovian, then the expression being maximized in HJB equation \eqref{HJB_0} achieves its maximum at the optimal portfolio denoted by $ \hat{\pi}(t,x,y) $. By the first-order condition, we have
	\begin{align} \label{pi}
		\hat{\pi}(t,x,y) = 
		\frac{ -\lambda(y)U_x - \rho a(y) U_{xy}} 
		{\sigma(y)U_{xx}}
		+ \frac{U_x}{\sigma(y)U_{xx}}\int_\mathbb{R}  \gamma_1(t, \zeta) \nu(d\zeta).
	\end{align}
	After substituting \eqref{Uxy} and \eqref{pi} into \eqref{HJB_0}, we obtain the following HJB equation
	\begin{align} \label{HJB}
		\left\{
			\begin{array}{ll}
				\partial_t U + \mathcal{H}(U)  =0 & \text{for } (t,x,y) \in [0,T]\times \mathbb{R}^+ \times \mathbb{R} \\
				U(T,x,y) = U_{_T}(x) & \text{for } (x,y) \in \mathbb{R}^+ \times \mathbb{R} , \\
			\end{array}
		\right.
	\end{align}
	where
	\begin{align*}
		\mathcal{H}(U) =\
		& \lambda(y)\sigma(y)\hat{\pi}(t, x, y)U_x + b(y)U_y \\ 
		& + \frac{1}{2}\sigma^2(y)\hat{\pi}^2(t, x, y)U_{xx} + \rho a(y)\sigma(y)\hat{\pi}(t, x, y) U_{xy} +\frac{1}{2}a^2(y)U_{yy} \\
		& + \int_\mathbb{R} \Big[ U_{_T}(\hat{\chi}) + (T-t)U^{(1)}(\hat{\chi}, \psi) + (T-t)^2U^{(2)}(\hat{\chi}, \psi) - U \\
		& \quad \quad \quad 
		- \sigma(y)\hat{\pi}(t, x, y)\gamma_1(t,\zeta) U_x - \gamma_2(t,\zeta) U_y \Big] \nu(d\zeta),
	\end{align*}
	 with $ \hat{\chi} = x+ \sigma(y)\hat{\pi}_{_T}(x,y)\gamma_{1_T}(\zeta) $.

	\section{Main results}
\label{sec3}
	In this section, we construct classical super-solution and sub-solution to HJB equation \eqref{HJB} using the second order expansion of utility function in powers of the time to horizon $ T - t $.
	We approximate the value function defined by \eqref{J} using the first order terms of expansion in power of the time to horizon $ T - t $. 
	We then control the error of this approximation by the second order terms of expansion in powers of the square time to horizon $ (T - t)^2 $.
	We also prove that value function lies between constructed super-solution and sub-solution using martingale inequalities. 
	
	\begin{theorem} \label{thm}
		Let $ \hat{U}(t,x,y) $ be the solution of HJB equation \eqref{HJB}. Then
		\begin{align} \label{U_hat}
			\hat{U}(t,x,y) = U_{_T}(x) + (T-t)U^{(1)}(x,y),
		\end{align}
		where
		\begin{align} \label{U^1}
			U^{(1)}(x,y) = \
			& \frac{1}{2}\bigg[ -\lambda^2(y)+\Big[\int_\mathbb{R} \gamma_{1_T}(\zeta) \nu(d\zeta)\Big]^2 \bigg] \frac{\big[U_{_T}'(x) \big]^2}{ U_{_T}''(x)} \nonumber\\
			& + \int_\mathbb{R} \bigg[  U_{_T}(\hat{\chi}) - U_{_T}(x)
			 + \Big[\lambda(y) - \int_\mathbb{R} \gamma_{1_T}(\zeta) \nu(d\zeta)\Big] \frac{\big[U_{_T}'(x)\big]^2}{U_{_T}''(x)} \gamma_{1_T}(\zeta) \bigg]\nu(d\zeta),
		\end{align}
		and
		\begin{align} 
			 \hat{\chi} = x - \Big[\lambda(y) - \int_\mathbb{R} \gamma_{1_T}(\zeta) \nu(d\zeta)\Big] \frac{U_{_T}'(x)}{U_{_T}''(x)} \gamma_{1_T}(\zeta) \label{chi_1}.
		\end{align}
		Moreover, there exists constants $ c > 0 $ and $ 0 < \varepsilon < \min\{1, T\} $ such that
		\begin{align} \label{ineq}
			\big| J(t,x,y) - \hat{U}(t,x,y) \big| \leq c (T-t)^2 f(x) \quad \text{for }  (t,x,y) \in (T-\varepsilon,T)\times \mathbb{R}^+ \times \mathbb{R},
		\end{align}
		where $ J(t,x,y) $ is value function defined by \eqref{J}; $ f(x) = 1 $ under Case 1 of Assumption \ref{Assume_UT} and $ f(x) = x^{1-\alpha} + x^{1-\beta} $ under Case 2 of Assumption \ref{Assume_UT}; the constants $ c $ and $ \varepsilon$ are independent of $ t $, $ x $ and $ y $.
	\end{theorem}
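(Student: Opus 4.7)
The plan has three stages: derive the formula for $U^{(1)}$ by matching the leading order term of an expansion of the HJB equation in powers of $(T-t)$, construct second-order corrected super- and sub-solutions, and then apply the Øksendal comparison principle (Theorem \ref{Oksendal}) to sandwich $J$ between them.

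First I would substitute the ansatz $\hat{U}(t,x,y) = U_{_T}(x) + (T-t)U^{(1)}(x,y)$ directly into \eqref{HJB}. Since $\partial_t \hat{U} = -U^{(1)}$ and the $y$-derivatives of $U_{_T}$ vanish, evaluating the first-order condition \eqref{pi} at $t=T$ immediately gives $\hat{\pi}_{_T}(x,y) = \bigl[-\lambda(y)+\int_\mathbb{R}\gamma_{1_T}(\zeta)\nu(d\zeta)\bigr]U_{_T}'(x)/[\sigma(y)U_{_T}''(x)]$, which is consistent with the $\hat{\chi}$ in \eqref{chi_1}. Matching the coefficient of $(T-t)^0$ in $\partial_t \hat{U} + \mathcal{H}(\hat{U})$ yields an algebraic equation for $U^{(1)}(x,y)$. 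Combining the drift, quadratic, and cross terms — they collapse via $\lambda(-\lambda+I) + \tfrac{1}{2}(-\lambda+I)^2 = \tfrac{1}{2}(I^2-\lambda^2)$ with $I = \int\gamma_{1_T}\nu(d\zeta)$ — and rearranging the jump integral with $-\sigma\hat{\pi}_{_T}\gamma_{1_T}U_{_T}' = (\lambda - I)\gamma_{1_T}[U_{_T}']^2/U_{_T}''$ gives exactly \eqref{U^1}.

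Next, to obtain \eqref{ineq}, I would introduce second-order correctors and define
\begin{align*}
U^{\pm}(t,x,y) = U_{_T}(x) + (T-t)U^{(1)}(x,y) \pm (T-t)^2 U^{(2)}(x,y),
\end{align*}
where $U^{(2)}$ is chosen so that $U^{+}$ is a classical super-solution and $U^{-}$ a sub-solution of \eqref{HJB}. Plugging $U^{\pm}$ into $\partial_t(\cdot) + \mathcal{H}(\cdot)$ and Taylor-expanding in $(T-t)$ produces a remainder of order $(T-t)$; using Assumption \ref{aiai} (boundedness of $\lambda$, $b$, $a$, $\gamma_2$ and their relevant derivatives, smoothness of $\gamma_{1_T}$) together with Assumption \ref{Assume_UT} ($U_{_T}\in C^5$, with the known scalings $[U_{_T}']^2/U_{_T}''$ bounded in the logarithmic case and proportional to $x^{1-\alpha}+x^{1-\beta}$ in the power case), this residual is pointwise dominated by $C(T-t)f(x)$. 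Picking $U^{(2)}(x,y) = \pm M f(x)$ with $M$ large and $t\in(T-\varepsilon,T)$ with $\varepsilon$ small therefore flips the sign appropriately and keeps the super/sub-solution inequalities valid uniformly in $(x,y)$. Both $U^{+}$ and $U^{-}$ match the terminal condition $U_{_T}(x)$ at $t=T$.

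Finally, I would verify the growth condition \eqref{growth} and uniform integrability of $(U^{\pm})^{-}$ along the controlled process for any admissible $\pi$, which follows from Definition \ref{def_1} together with Assumption \ref{aiai}. Invoking part (1) of Theorem \ref{Oksendal} to $U^{+}$ with the supremum over admissible controls gives $J \leq U^{+}$; applying part (2) to $U^{-}$ with the candidate feedback control $\hat{\pi}_{_T}$ gives $J \geq U^{-}$. Subtracting $\hat{U}$ from both inequalities yields $|J - \hat{U}| \leq (T-t)^2 M f(x)$, which is \eqref{ineq}. The main obstacle will be the explicit bound in the second step: the jump integrands $U^{(1)}(\hat{\chi},\psi)$ and $U^{(2)}(\hat{\chi},\psi)$ are nonlocal in both $x$ and $y$, so to dominate them by $f(x)$ uniformly I must split $\int_\mathbb{R}\cdots\nu(d\zeta)$ into small-jump and large-jump parts, Taylor-expand the small-jump piece using the $C^5$ regularity of $U_{_T}$, and control the large-jump piece through the absolute boundedness hypotheses on $\gamma_2$, $1/\gamma_2$, $a$, $1/a$, and $\lambda$ in Assumption \ref{aiai}, which is precisely where the structural hypotheses in Assumption \ref{Assume_UT} become indispensable.
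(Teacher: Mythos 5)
Your derivation of $U^{(1)}$ by matching the $(T-t)^0$ coefficient (including the collapse $\lambda(-\lambda+I)+\tfrac12(\lambda-I)^2=\tfrac12(I^2-\lambda^2)$) and your construction of $U^{\pm}=U_{_T}(x)+(T-t)U^{(1)}\pm(T-t)^2Mf(x)$ follow essentially the paper's route: the paper computes the exact second-order coefficient $U^{(2)}$ in \eqref{U^2} and proves in Lemma \ref{A.0} that each of its terms is of order $f(x)$ uniformly, which is exactly what licenses replacing it by a multiple of $f(x)$; your constant $M$ must be uniform in $y$, which is where the boundedness hypotheses of Assumption \ref{aiai} enter, so this part is sound in outline.

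The gap is in the sandwiching step. Part (2) of Theorem \ref{Oksendal} is a verification statement: it requires a control achieving exact equality $A_{\widehat{u}}\varphi+f=0$ and then concludes $\varphi=\Phi$; it cannot be applied to the strict sub-solution $U^{-}$, and if it could, it would assert equality with the value function rather than the one-sided bound you need. The correct lower-bound argument (the one the paper uses) is: take the specific feedback portfolio generated by the sub-solution, verify it is admissible (the content of Lemma \ref{A.2}), apply It\^{o}'s formula along the corresponding wealth process, use that the drift $\partial_t U^{-}+\mathcal{H}(U^{-})>0$ near $T$, localize with stopping times $t_n\rightarrow T$ so the stochastic integrals are true martingales, and pass to the limit by dominated convergence, the domination being supplied by the uniform-integrability estimate of Lemma \ref{A.1} (Doob's maximal inequality, the It\^{o} isometries, and the integrability conditions built into Definition \ref{def_1}); this gives $\mathbb{E}\big(U_{_T}(X_{_T})\,\big|\,x,y\big)\geq U^{-}(t,x,y)$ for that one admissible portfolio, hence $J\geq U^{-}$ since $J$ is a supremum. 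Your upper bound via part (1) is in the right spirit (a super-solution of the maximized equation is a super-solution for every control), but the growth condition \eqref{growth} and the uniform integrability of $(U^{\pm})^{-}$ are precisely the nontrivial checks, and must hold for every admissible portfolio, not just the feedback one; asserting that they ``follow from Definition \ref{def_1} and Assumption \ref{aiai}'' skips the work done in Lemma \ref{A.1}. With part (2) replaced by the sub-solution/admissible-control argument and the integrability estimates actually carried out, your proof coincides with the paper's.
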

	\begin{proof} We prove this theorem in two steps. \\
		\textbf{Step 1.} We construct super- and sub-solutions to HJB equation \eqref{HJB}. Similar to \eqref{U}, we expand $ \gamma_1(t,\zeta) $ and $ \gamma_2(t,\zeta) $ at terminal time $ T $ using power series:
		\begin{align} 
			\gamma_1(t,\zeta) & = \gamma_{1_T}(\zeta) + (T-t)\gamma_1^{(1)}(\zeta) + (T-t)^2\gamma_1^{(2)}(\zeta) \label{gam1}, \\ 
			\gamma_2(t,\zeta) & = \gamma_{2_T}(\zeta) + (T-t)\gamma_2^{(1)}(\zeta) + (T-t)^2\gamma_2^{(2)}(\zeta) \label{gam2}.
		\end{align}
		For convenience, we denote $ \lambda = \lambda(y), \sigma = \sigma(y), b = b(y), a = a(y) $, $ \gamma_{1_T} = \gamma_{1_T}(\zeta) $, $ \gamma_{2_T} = \gamma_{2_T}(\zeta) $, $ \gamma_1^{(1)} = \gamma_1^{(1)}(\zeta) $, $ \delta = T-t $, $ U^{(1)} = U^{(1)}(x,y) $, $ U^{(2)} = U^{(2)}(x,y) $ and $ \int \cdot = \int_\mathbb{R} \cdot\ \nu(d\zeta) $.
		
		We substitute \eqref{U}, \eqref{gam1} and \eqref{gam2} into HJB equation \eqref{HJB} with optimal portfolio \eqref{pi}. Then by ignoring the terms of order $ O(2) $, we have
		\begin{align} \label{HJB_2}
			& -U^{(1)} - 2\delta U^{(2)}
			+ \lambda\sigma\hat{\pi}(t, x, y)\Big[U_{_T}'(x) + \delta U_x^{(1)}\Big]
			+ b \delta U_y^{(1)} \nonumber\\
			& + \frac{1}{2}\sigma^2\hat{\pi}^2(t, x, y)\Big[U_{_T}''(x) + \delta U_{xx}^{(1)}\Big]
			+ \rho a \sigma \hat{\pi}(t, x, y) \delta U_{xy}^{(1)}
			+\frac{1}{2}a^2 \delta U_{yy}^{(1)} \nonumber\\
			& + \int \bigg[ U_{_T}(\hat{\chi}) + \delta U^{(1)}(\hat{\chi}, \psi) - U_{_T}(x) - \delta U^{(1)} \nonumber\\
			& \quad \quad \quad 
			- \sigma \hat{\pi}(t, x, y) \Big[U_{_T}'(x)\gamma_{1_T} + \delta U_x^{(1)}\gamma_{1_T} + \delta U_{_T}'(x)\gamma_1^{(1)} \Big] - \delta U_y^{(1)}\gamma_{2_T} \bigg] =0,
		\end{align}
		where
		\begin{align} \label{pi_2}
			\hat{\pi}(t,x,y) = \ 
			& \frac{- \lambda\big[U_{_T}'(x) + \delta U_x^{(1)}\big] - \rho a \delta U_{xy}^{(1)}}{\sigma\big[U_{_T}''(x) + \delta U_{xx}^{(1)}\big]} \nonumber \\
			& + \frac{1}
			{\sigma\big[U_{_T}''(x) + \delta U_{xx}^{(1)}\big]}
			\int \Big[U_{_T}'(x)\gamma_{1_T} + \delta U_x^{(1)}\gamma_{1_T} + \delta U_{_T}'(x)\gamma_1^{(1)} \Big].
		\end{align}
		By ignoring the terms of order $ O(1) $ in equation \eqref{HJB_2}, we have
		\begin{align} \label{U1}
			U^{(1)} = \ 
			& \lambda\sigma\hat{\pi}_{_T}(x, y) U_{_T}'(x)
			+ \frac{1}{2}\sigma^2\hat{\pi}_{_T}^2(x, y) U_{_T}''(x) + \int \Big[ U_{_T}(\hat{\chi}) - U_{_T}(x) - \sigma\hat{\pi}_{_T}(x, y) U_{_T}'(x)\gamma_{1_T} \Big],
		\end{align}
		where
		\begin{align} \label{pi_3}
			\hat{\pi}_{_T}(x,y) = \frac{- \lambda + \int \gamma_{1_T}} {\sigma } \cdot \frac{U_{_T}'(x)}{U_{_T}''(x)}.
		\end{align}
		Finally we substitute \eqref{pi_3} into \eqref{U1} and substitute \eqref{pi_3} into \eqref{chi}, proving the equations \eqref{U^1} and \eqref{chi_1} respectively. 
		
		We substitute \eqref{pi_2} into \eqref{HJB_2} and clear fractions. Then we equate the sum of coefficients of $ \delta $ to zero and solve the $ U^{(2)} $ as
		\begin{align}\label{U^2}
			U^{(2)} =\ 
			& \frac{1}{2}  \bigg[ b U_y^{(1)} + \frac{1}{2} a^2  U_{yy}^{(1)} + \int \Big[  U^{(1)}(\hat{\chi},\psi) -  U^{(1)}  -  U_y^{(1)}\gamma_{2_T} \Big] \bigg] \nonumber\\
			& +    \frac{U_{xx}^{(1)}}{U_{_T}''(x)} \bigg[ -U^{(1)} + \int \Big[ U_{_T}(\hat{\chi}) - U_{_T}(x) \Big] \bigg] \nonumber\\
			& + \frac{L}{2}  \frac{U_{_T}'(x)}{U_{_T}''(x)}  \bigg[ - \lambda  U_x^{(1)} - \rho a  U_{xy}^{(1)} +  \int \Big[ U_x^{(1)}\gamma_{1_T} +  U_{_T}'(x)\gamma_1^{(1)} \Big] \bigg] - \frac{L^2}{4} \frac{U_{xx}^{(1)}}{U_{_T}''(x)} \frac{\big[U_{_T}'(x)\big]^2}{U_{_T}''(x)},
		\end{align}
		where $ L = \lambda - \int  \gamma_{1_T} $, 
		\begin{align}
			U^{(1)}_x =\
			& \frac{1}{2}\Big[ -\lambda^2+\big[\int \gamma_{1_T} \big]^2 \Big] \Big(\frac{\big[U_{_T}'(x) \big]^2}{U_{_T}''(x)}\Big)' \nonumber\\
			& + \int \bigg[ U_{_T}'(\hat{\chi}) \Big[1 -  L \Big(\frac{U_{_T}'(x)}{U_{_T}''(x)}\Big)' \gamma_{1_T}\Big]
			- U_{_T}'(x) + L \Big(\frac{\big[U_{_T}'(x) \big]^2}{U_{_T}''(x)}\Big)' \gamma_{1_T} \bigg], \label{U1x} \\
			U^{(1)}_{xx} =\
			& \frac{1}{2}\Big[ -\lambda^2+\big[\int \gamma_{1_T} \big]^2 \Big] 
			\Big(\frac{\big[U_{_T}'(x) \big]^2}{U_{_T}''(x)}\Big)'' \nonumber\\
			& + \int \bigg[ U_{_T}''(\hat{\chi}) \Big[ 1 -  L \Big(\frac{U_{_T}'(x)}{U_{_T}''(x)}\Big)' \gamma_{1_T} \Big]^2  -  L U_{_T}'(\hat{\chi})  \Big(\frac{U_{_T}'(x)}{U_{_T}''(x)}\Big)'' \gamma_{1_T} - U_{_T}''(x)  \nonumber\\
			& \quad \quad \quad
			+ L \Big(\frac{\big[U_{_T}'(x) \big]^2}{U_{_T}''(x)}\Big)'' \gamma_{1_T} \bigg], \label{U1xx} \\
			U^{(1)}_y =\
			& -\lambda\lambda'  \frac{\big[U_{_T}'(x) \big]^2}{U_{_T}''(x)} 
			+ \int \lambda' \frac{\big[U_{_T}'(x)\big]^2}{U_{_T}''(x)} \gamma_{1_T}, \label{Uy}\\
			U^{(1)}_{yy} =\
			& -\Big[(\lambda')^2+ \lambda\lambda''\Big]  \frac{\big[U_{_T}'(x) \big]^2}{U_{_T}''(x)}
			+ \int \lambda'' \frac{\big[U_{_T}'(x)\big]^2}{U_{_T}''(x)} \gamma_{1_T}, \label{Uyy} \\
			U^{(1)}_{xy} =\
			& -\lambda\lambda' \Big(\frac{\big[U_{_T}'(x) \big]^2}{U_{_T}''(x)}\Big)'
			+ \int  \lambda' \Big(\frac{\big[U_{_T}'(x) \big]^2}{U_{_T}''(x)}\Big)' \gamma_{1_T}, \label{U_xy}
		\end{align}
		with
		\begin{align}
			& \Big(\frac{U_{_T}'(x)}{U_{_T}''(x)}\Big)'
			= 1 - \frac{U_{_T}'(x)}{\big[U_{_T}''(x) \big]^2} U_{_T}^{(3)}(x), \label{key1'} \\
			& \Big(\frac{U_{_T}'(x)}{U_{_T}''(x)}\Big)''
			= - \frac{U_{_T}^{(3)}(x)}{U_{_T}''(x)} + 2 \frac{U_{_T}'(x)}{\big[U_{_T}''(x) \big]^3} \big[ U_{_T}^{(3)}(x) \big]^2 - \frac{U_{_T}'(x)}{\big[U_{_T}''(x) \big]^2} U_{_T}^{(4)}(x), \label{key1''} \\
			& \Big(\frac{\big[U_{_T}'(x) \big]^2}{U_{_T}''(x)}\Big)'
			= 2U_{_T}'(x) -  \Big[\frac{U_{_T}'(x)}{U_{_T}''(x)} \Big]^2 U_{_T}^{(3)}(x), \label{key2'} \\
			& \Big(\frac{\big[U_{_T}'(x) \big]^2}{U_{_T}''(x)}\Big)'' 
			= 2U_{_T}''(x) - 2\frac{U_{_T}'(x)}{U_{_T}''(x)}U_{_T}^{(3)}(x) + 2\frac{\big[U_{_T}'(x)\big]^2}{\big[U_{_T}''(x)\big]^3}\big[U_{_T}^{(3)}(x)\big]^2
			- \Big[\frac{U_{_T}'(x)}{U_{_T}''(x)} \Big]^2 U_{_T}^{(4)}(x). \label{key2''}
		\end{align}
		We take $ U_{_T}(\hat{\chi}) - U_{_T}(x) $ as one term, and enumerate all the terms of $ U^{(2)} $ as $ u_1^{(2)} , \cdots , u_l^{(2)} $, that is $ U^{(2)} = \sum_{i=1}^l u_i^{(2)} $. We then prove Lemma \ref{A.0} in the Appendix \ref{apendix} showing $ u_i^{(2)} \sim f(x) $ for $ 1\leq i \leq l $, where $ f(x) = 1 $ under Case 1 of Assumption \ref{Assume_UT} and $ f(x) = x^{1-\alpha} + x^{1-\beta} $ under Case 2 of Assumption \ref{Assume_UT}.
		
		We let
		\begin{align*}
		u^{(2)}(x,y) = 1 + l \cdot \max_{1 \leq i \leq l} \sup_{x>0} \frac{\big|u_i^{(2)}(x,y)\big|}{f(x)}.
		\end{align*}
		Then we define super-solution $ \overline{U} = \overline{U}(t,x,y) $ and sub-solution $ \underline{U} = \underline{U}(t,x,y) $ to HJB equation \eqref{HJB} by 
		\begin{align*}
			\overline{U}(t,x,y) & = U_{_T}(x) + (T-t)U^{(1)}(x,y) + (T-t)^2u^{(2)}(x,y)f(x), \\
			\underline{U}(t,x,y) & = U_{_T}(x) + (T-t)U^{(1)}(x,y) - (T-t)^2u^{(2)}(x,y)f(x).
		\end{align*}
		We note that the coefficient of $ U^{(2)} $ is $ - 2(T-t) < 0 $ in \eqref{HJB_2}. After substituting $ \overline{U} $ into the left hand side of HJB equation \eqref{HJB} and clearing fractions, we observe that the coefficient of $ T-t $ is strictly negative based on definition of $ \overline{U} $. On the other hand, after substituting $ \underline{U} $, we observe that the coefficient of $ T-t $ is strictly positive based on  definition of $ \underline{U} $.
		
		Because $ U^{(2)} $ is solved by equating the sum of coefficients of $ T-t $ to zero, therefore $ u_i^{(2)} \sim f(x) $ implies the coefficient of $ T-t $ in either $ \overline{U}_t + \mathcal{H}(\overline{U}) $ or $ \underline{U}_t + \mathcal{H}(\underline{U}) $ is in the order of $ f(x) $.
		We recall equation \eqref{sim7} showing $ U^{(1)}_{xx} \sim U_{_T}''(x) $, which implies $ U_{_T}''(x)U_{xx}^{(1)} \sim \big[ U_{_T}''(x) \big]^2 $. Then the coefficient of $ T-t $ in either $ \big[\overline{U}_{xx}\big]^2 $ or $ \big[\underline{U}_{xx}\big]^2 $ is in the order of $ \big[ U_{_T}''(x) \big]^2 $. 
		Thus, as inequality (3.8) in \cite{Kumar}, we still have the following inequalities: 
		\begin{align} 
			\big[\overline{U}_{xx}\big]^2 \big|\overline{U}_t + \mathcal{H}(\overline{U})\big| & \leq c_3 (T-t)\tilde{f}(x), \label{3.8}\\
			\big[\underline{U}_{xx}\big]^2 \big|\underline{U}_t + \mathcal{H}(\underline{U})\big| & \leq c_4 (T-t)\tilde{f}(x), \nonumber
		\end{align}
		where $ c_3 $ and $ c_4 $ are constants, and $ \tilde{f}(x) \sim \big[ U_{_T}''(x) \big]^2f(x) $.
		Under Case 1 of Assumption \ref{Assume_UT}, we have $ \tilde{f}(x) = x^{-4} $.
		Under Case 2 of Assumption \ref{Assume_UT}, we have
		\begin{align*}
			\lim_{x \rightarrow \infty} \frac{\big[U_{_T}''(x) \big]^2}{x^{-2\alpha-2} + x^{-2\beta-2}} = 
			\lim_{x \rightarrow \infty} \frac{[-c_1\alpha x^{-\alpha-1} - c_2\beta x^{-\beta-1}]^2}{x^{-2\alpha-2} + x^{-2\beta-2}}
			= c_1^2 \alpha^2 \text{ or } c_2^2\beta^2,
		\end{align*}
		that is $ \big[U_{_T}''(x) \big]^2 \sim x^{-2\alpha-2} + x^{-2\beta-2} $ which gives $ \tilde{f}(x) = [x^{-2\alpha-2} + x^{-2\beta-2}][x^{1-\alpha} + x^{1-\beta}] $. Furthermore, for some $ c_3 <0 $, we observe that $ \frac{1}{c_3}x^{-2} \leq \overline{U}_{xx} \leq c_3x^{-2} < 0 $ under Case 1 of Assumption \ref{Assume_UT} and $ \frac{1}{c_3} [x^{-\alpha-1} + x^{-\beta-1}] \leq \overline{U}_{xx} \leq c_3[x^{-\alpha-1} + x^{-\beta-1}] <0 $ under Case 2 of Assumption \ref{Assume_UT}. Thus $ \big[\overline{U}_{xx}\big]^2 $ is bounded away from zero. So is the $ \big[\underline{U}_{xx}\big]^2 $ for some $ c_4 <0 $.
		
		We already have the coefficient of $ T-t $ in $ \big[\overline{U}_{xx}\big]^2 \big[\overline{U}_t + \mathcal{H}(\overline{U})\big] $ is strictly negative, in the order of $ \tilde{f}(x) $, and bounded. In addition, we observe that inequality \eqref{3.8} implies the $ o(T-t )$ terms in $ \big[\overline{U}_{xx}\big]^2 \big[\overline{U}_t + \mathcal{H}(\overline{U})\big] $ are in the order of $ \tilde{f}(x) $ and bounded. For $ t $ near $ T $, the strictly negative coefficient of $ T-t $ uniformly dominates the $ o(T-t ) $ terms, that is $ \overline{U}_t + \mathcal{H}(\overline{U}) < 0 $. Thus $ \overline{U} $ is the classical super-solution of HJB equation \eqref{HJB}. To prove $ \underline{U} $ is the classical sub-solution of HJB equation \eqref{HJB}, we can use a mirror of this discussion. \\
		\textbf{Step 2.} We prove $ \underline{U}(t,x,y) \leq J(t,x,y) \leq \overline{U}(t,x,y) $.
		We substitute super-solution $ \overline{U} $ into optimal portfolio \eqref{pi} to generate the portfolio
		\begin{align*}
			\overline{\hat{\pi}} = \overline{\hat{\pi}}(t,x,y)
			=
			\frac{-\lambda(y)\overline{U}_x - \rho a(y) \overline{U}_{xy}}{\sigma(y)\overline{U}_{xx}}
			+ \frac{\overline{U}_x}{\sigma(y)\overline{U}_{xx}} \int_\mathbb{R}  \gamma_1(t, \zeta) \nu(d\zeta).
		\end{align*}
		Then we apply the 2-dimensional It\^{o}'s formula to $ \overline{U} $ and obtain
		\begin{align}
			& \overline{U}(T, X_{_T}, Y_{_T}) - \overline{U}(t,x,y)  \nonumber\\
			& = \int_t^T \Big[ \partial_s \overline{U} + \lambda\sigma\overline{\hat{\pi}}\overline{U}_x + b\overline{U}_y
			+ \frac{1}{2}\sigma^2\overline{\hat{\pi}}^2\overline{U}_{xx} + \rho a\sigma\overline{\hat{\pi}} \overline{U}_{xy} +\frac{1}{2}a^2\overline{U}_{yy} \Big] ds \nonumber\\
			& \quad
			+ \int_t^T \Big[ \sigma\overline{\hat{\pi}}\overline{U}_x + \rho a \overline{U}_y \Big] dW^{(1)}_s
			+ \int_t^T \Big[ a\sqrt{1-\rho^2}\overline{U}_y \Big] dW^{(2)}_s \label{mart_1}\\
			& \quad
			+ \int_t^T \int_\mathbb{R} \Big[ \overline{U}\big(s, x+ \sigma\overline{\hat{\pi}}\gamma_1(s,\zeta), y+\gamma_2(s,\zeta)\big) - \overline{U} - \sigma\overline{\hat{\pi}}\gamma_1(s,\zeta) \overline{U}_x 
			- \gamma_2(s,\zeta) \overline{U}_y \Big] \nu(d\zeta) ds \nonumber\\
			& \quad
			+ \int_t^T \int_\mathbb{R} \Big[ \overline{U}\big(s, x+ \sigma\overline{\hat{\pi}}\gamma_1(s,\zeta), y+\gamma_2(s,\zeta)\big) - \overline{U} \Big]\tilde{N}(ds,d\zeta). \label{mart_2}
		\end{align}
		We note that the stochastic integrals \eqref{mart_1} and \eqref{mart_2} are local martingales. Then we let $ \{t_n \}_{n=1}^\infty \subset [t, T] $ be a sequence of stopping times such that $ t_n \leq t_{n+1} $ and $ t_n \rightarrow T $. With replacing $ T $ by $ t_n $, we observe that the stochastic integrals \eqref{mart_3} and \eqref{mart_4} are martingales. 
		\begin{align}
			& \overline{U}(t_n, X_{t_n}, Y_{t_n}) - \overline{U}(t,x,y)  \nonumber\\
			& = \int_t^{t_n} \Big[ \partial_s \overline{U} + \lambda\sigma\overline{\hat{\pi}}\overline{U}_x + b\overline{U}_y
			+ \frac{1}{2}\sigma^2\overline{\hat{\pi}}^2\overline{U}_{xx} + \rho a\sigma\overline{\hat{\pi}} \overline{U}_{xy} +\frac{1}{2}a^2\overline{U}_{yy} \Big] ds \label{HJB_3}\\
			& \quad
			+ \int_t^{t_n} \Big[ \sigma\overline{\hat{\pi}}\overline{U}_x + \rho a \overline{U}_y \Big] dW^{(1)}_s
			+ \int_t^{t_n} \Big[ a\sqrt{1-\rho^2}\overline{U}_y \Big] dW^{(2)}_s \label{mart_3}\\
			& \quad
			+ \int_t^{t_n} \int_\mathbb{R} \Big[ \overline{U}\big(s, x+ \sigma\overline{\hat{\pi}}\gamma_1(s,\zeta), y+\gamma_2(s,\zeta)\big) - \overline{U} - \sigma\overline{\hat{\pi}}\gamma_1(s,\zeta)\overline{U}_x  
			- \gamma_2(s,\zeta)\overline{U}_y \Big] \nu(d\zeta) ds \label{HJB_4}\\
			& \quad
			+ \int_t^{t_n} \int_\mathbb{R} \Big[ \overline{U}\big(s, x+ \sigma\overline{\hat{\pi}}\gamma_1(s,\zeta), y+\gamma_2(s,\zeta)\big) - \overline{U} \Big]\tilde{N}(ds,d\zeta). \label{mart_4}
		\end{align}
		Furthermore, we note that the integrand of \eqref{HJB_3} $ + $ \eqref{HJB_4} is exactly the $ \overline{U}_t + \mathcal{H}(\overline{U}) < 0 $. By the martingale property that conditional expectation of martingale is zero, we have $ \mathbb{E}\big(\overline{U}(t_n, X_{t_n}, Y_{t_n}) - \overline{U} \big| x, y \big) < 0 $, that is
		\begin{align} \label{ineq2}
			 \mathbb{E}\big(\overline{U}(t_n, X_{t_n}, Y_{t_n}) \big| x, y \big) < \overline{U}.
		\end{align}
		
		By definition of $ \overline{U} $ and triangle inequality, we have
		\begin{align*}
			\big| \overline{U}(t_n, X_{t_n}, Y_{t_n}) \big| 
			& = \big| U_{_T}(X_{t_n}) + (T-t_n)U^{(1)}(X_{t_n},Y_{t_n}) + (T-t_n)^2u^{(2)}(X_{t_n},Y_{t_n}) f(X_{t_n}) \big| \\
			& \leq \big| U_{_T}(X_{t_n}) \big| + \big| T U^{(1)}(X_{t_n},Y_{t_n}) \big| + \big| T^2u^{(2)}(X_{t_n},Y_{t_n}) f(X_{t_n}) \big| \\
			& \leq c_5 g(X_{t_n}),
		\end{align*}
		with some constant $ c_5 $. We recall equation \eqref{sim3} showing $ U^{(1)} \sim f(x) $, and observe that $ u_j^{(2)} \sim f(x) \Rightarrow u^{(2)}(x,y) f(x) \sim f(x) $. Then we have $ g(x) = U_{_T}(x) + f(x) = \ln(x) +1 $ under Case 1 of Assumption \ref{Assume_UT}.
		In addition, we recall equation \eqref{sim12} showing $ U_{_T}(x) \sim f(x) $ under Case 2 of Assumption \ref{Assume_UT}. Thus we have $ g(x) = f(x) = x^{1-\alpha} + x^{1-\beta} $ under this case.
		
		We prove Lemma \ref{A.1} in the Appendix \ref{apendix} showing that $ \big\{ g(X_{t_n}) \big\}_{n=1}^\infty $ is uniformly bounded by an integrable random variable. We already have $ \big| \overline{U}(t_n, X_{t_n}, Y_{t_n}) \big| \leq c_5 g(X_{t_n}) $. And we observe that 
		\begin{align*}
			t_n \rightarrow T\ \Rightarrow\ \overline{U}(t_n, X_{t_n}, Y_{t_n}) \rightarrow \overline{U}(T, X_{_T}, Y_{_T}) = U_{_T}(X_{_T}).
		\end{align*}
		By Dominated Convergence Theorem, we have
		\begin{align} \label{lim}
			\lim_{n \rightarrow \infty} \mathbb{E}\big(\overline{U}(t_n, X_{t_n}, Y_{t_n}) \big| x, y \big) = \mathbb{E}\big(U_{_T}(X_{_T}) \big| x, y \big).
		\end{align}
		Combining \eqref{ineq2} with \eqref{lim}, we have $ \mathbb{E}\big(U_{_T}(X_{_T}) \big| x, y \big) < \overline{U} $ . 
		We also prove Lemma \ref{A.2} in the Appendix \ref{apendix} showing that $ \overline{\hat{\pi}} $ is admissible. Thus $ J \leq \overline{U} $. 
		
		To prove $ J \geq \underline{U} $, we can use a mirror of above discussions. Finally by definitions of $ \underline{U} $ and $ \overline{U} $, the inequality \eqref{ineq} holds.
	\end{proof}

\section{Numerical examples}
\label{sec4}

In this section, we introduce a further approximation to achieve the solution given by \eqref{U_hat} for HJB equation \eqref{HJB} and find the rate convergence of such approximation. We choose the explicit value function obtained by \cite{Fouque}, and then include the jump term as the benchmark model. Based on it, we compare our approximating value function to the value function obtained in \cite{Kumar}. 

	\begin{theorem}
		If the solution of HJB equation \eqref{HJB} is given by \eqref{U_hat}, then it can be further approximated by 
		\begin{align} \label{Ex1}
			\hat{U}(t,x,y) 
			\approx U_{_T}(x) - \frac{T-t}{2}\bigg[ \lambda^2(y) -\Big[\int_\mathbb{R} \gamma_{1_T}(\zeta) \nu(d\zeta)\Big]^2 \bigg] \frac{\big[U_{_T}'(x) \big]^2}{ U_{_T}''(x)}.
		\end{align}
		In addition, if $ U_{_T}(x) = c x^{-2} $ with some constant $ c $, and for all $\zeta$ and $y$, if $\big| \gamma_{1_T}(\zeta) [ \lambda(y) - \int_\mathbb{R} \gamma_{1_T}(\zeta) \nu(d\zeta)] \big| <3$, then the approximation \eqref{Ex1} converges in the order of $O( n d^n) $, where $ d<1 $, and $n>1$ an integer.

	\end{theorem}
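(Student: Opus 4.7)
The first identity in the theorem is obtained simply by dropping the integral term in the closed form \eqref{U^1} for $U^{(1)}$. The substance of the statement is therefore to justify, under the ansatz $U_{_T}(x)=cx^{-2}$ together with the smallness hypothesis, that the dropped integral is of order $O(n\,d^n)$ for some $d<1$ when expanded as a power series. My plan is to introduce the dimensionless quantity
$$\xi := \frac{L(y)\,\gamma_{1_T}(\zeta)}{3},\qquad L(y):=\lambda(y)-\int_{\mathbb{R}}\gamma_{1_T}(\zeta)\,\nu(d\zeta),$$
to expand $U_{_T}(\hat{\chi})$ in a binomial series about $x$, and to exhibit an algebraic cancellation that leaves only a tail series in $\xi^n$ starting from $n=2$.

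The first step is an explicit computation for $U_{_T}(x)=cx^{-2}$: one gets $U_{_T}'(x)/U_{_T}''(x)=-x/3$ and $[U_{_T}'(x)]^{2}/U_{_T}''(x)=2c/(3x^{2})$, so that \eqref{chi_1} reads $\hat{\chi}=x(1+\xi)$ and hence $U_{_T}(\hat{\chi})=cx^{-2}(1+\xi)^{-2}$. The hypothesis $|L(y)\gamma_{1_T}(\zeta)|<3$ guarantees $|\xi|<1$ uniformly in $(\zeta,y)$, so the binomial identity $(1+\xi)^{-2}=\sum_{n\geq 0}(-1)^{n}(n+1)\xi^{n}$ converges absolutely.

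The second step is the cancellation: the integrand of the integral in \eqref{U^1} becomes
\begin{align*}
U_{_T}(\hat{\chi})-U_{_T}(x)+L(y)\gamma_{1_T}(\zeta)\frac{[U_{_T}'(x)]^{2}}{U_{_T}''(x)}
&= cx^{-2}\sum_{n\geq 1}(-1)^{n}(n+1)\xi^{n}+2cx^{-2}\xi \\
&= cx^{-2}\sum_{n\geq 2}(-1)^{n}(n+1)\xi^{n},
\end{align*}
because the $n=1$ contribution $-2cx^{-2}\xi$ exactly cancels the explicit correction $2cx^{-2}\xi$. This cancellation is essentially built into the very definition of $\hat{\chi}$ in \eqref{chi_1}.

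The final step is a tail estimate: with $d:=\sup_{\zeta,y}|\xi|<1$, the general term of this series is bounded by $(n+1)d^{n}=O(n\,d^{n})$, and the tail beyond index $n$ satisfies $\sum_{k\geq n+1}(k+1)d^{k}=d^{n+1}\bigl[(n+2)/(1-d)+d/(1-d)^{2}\bigr]=O(n\,d^{n})$, which after integration against $\nu(d\zeta)$ transfers directly to the error in the approximation \eqref{Ex1}. The one point that requires attention is the interchange of $\nu(d\zeta)$-integration with the summation, which follows from absolute convergence and dominated convergence using the uniform bound $|\xi|\leq d<1$; I do not anticipate any deeper obstruction, as the entire argument reduces to manipulating the closed-form geometric-type series $(1+\xi)^{-2}$ once the coordinate $\xi$ has been introduced.
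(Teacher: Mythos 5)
Your proposal is correct and follows essentially the same route as the paper: the approximation \eqref{Ex1} comes from the first-order truncation of $U_{_T}(\hat{\chi})$, whose cancellation with the explicit correction term makes the integrand in \eqref{U^1} vanish, and for $U_{_T}(x)=cx^{-2}$ your binomial series $cx^{-2}\sum_{n\geq 0}(-1)^n(n+1)\xi^n$ is exactly the paper's series \eqref{Ex13} with $\xi=\tfrac{L}{3}\gamma_{1_T}(\zeta)$. The only minor difference is in the tail estimate: you sum the geometric-type tail $\sum_{k\geq n+1}(k+1)d^{k}$ directly, while the paper reaches the same $O(n\,d^{n})$ rate via the ratio test together with the Lagrange form of the remainder.
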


	\begin{proof}
		We denote $ L = \lambda(y) - \int_\mathbb{R} \gamma_{1_T}(\zeta) \nu(d\zeta) $.
		We find in Section \ref{sec3} that the solution of HJB equation \eqref{HJB} is
		\begin{align}
			\label{recall} 
			\hat{U}(t,x,y) =\
			& U_{_T}(x) - \frac{T-t}{2}\bigg[ \lambda^2(y) -\Big[\int_\mathbb{R} \gamma_{1_T}(\zeta) \nu(d\zeta)\Big]^2 \bigg] \frac{\big[U_{_T}'(x) \big]^2}{ U_{_T}''(x)} \nonumber \\
			& + (T-t) \int_\mathbb{R} \Big[ U_{_T}(\hat{\chi}) - U_{_T}(x) + L \frac{\big[U_{_T}'(x)\big]^2}{U_{_T}''(x)} \gamma_{1_T}(\zeta) \Big]\nu(d\zeta),
		\end{align}
		where $\hat{\chi} = x - L \frac{U_{_T}'(x)}{U_{_T}''(x)} \gamma_{1_T}(\zeta)$. We note that $ \int_\mathbb{R} \nu(d\zeta) = \infty $. We expand $ U_{_T}(\hat{\chi}) $ at $ x $ using Taylor series: 
		\begin{align*}
			 U_{_T}(\hat{\chi})
			= \sum_{n=0}^{\infty}
			\frac{U_{_T}^{(n)}(x)}{n!} \Big[ - L \frac{U_{_T}'(x)}{U_{_T}''(x)} \gamma_{1_T}(\zeta) \Big]^n .
		\end{align*}
		We then choose the first order term of Taylor expansion of $U_{_T}(\hat{\chi}) $, that is
		\begin{align} \label{app}
			U_{_T}(\hat{\chi})
			\approx U_{_T}(x) - L \frac{\big[U_{_T}'(x)\big]^2}{U_{_T}''(x)} \gamma_{1_T}(\zeta).
		\end{align}
		Finally we substitute \eqref{app} into \eqref{recall} and obtain the approximation \eqref{Ex1}.

		If $ U_{_T}(x) = c x^{-2} $ with some constant $ c $, then
		\begin{align} \label{Ex13}
			U_{_T}(\hat{\chi})
			= \sum_{n=0}^{\infty}
			[n+1]\Big[- \frac{L}{3} \gamma_{1_T}(\zeta)\Big]^n c x^{-2}.
		\end{align}
		We apply the following ratio test for the convergence of above series \eqref{Ex13}.
		\begin{align*}
			\lim_{n \rightarrow \infty} 
			\Bigg| \frac{[n+2] \big[- \frac{L}{3} \gamma_{1_T}(\zeta)\big]^{n+1} c x^{-2}}
			{[n+1] \big[- \frac{L}{3} \gamma_{1_T}(\zeta)\big]^n c x^{-2}} \Bigg|
			= \Big| - \frac{L}{3} \gamma_{1_T}(\zeta) \Big|.
		\end{align*}
		If $ \big| L \gamma_{1_T}(\zeta) \big| <3 $, then the series \eqref{Ex13} converges.
		With the Lagrange form of remainder, we also observe that
		\begin{align*}
			\lim_{n \rightarrow \infty}
			\frac{c [n+2] \big[ - \frac{L}{3} \gamma_{1_T}(\zeta)x \big]^{n+1} \epsilon^{-n-3}}{n \big[- \frac{L}{3} \gamma_{1_T}(\zeta)x \big]^n \epsilon^{-n}} = c \epsilon^{-3} \Big[- \frac{L}{3} \gamma_{1_T}(\zeta)x \Big],
		\end{align*}
		where $ x < \epsilon < \hat{\chi} $. We note that $ \big| L \gamma_{1_T}(\zeta) \big| <3 \Rightarrow - \frac{L}{3} \gamma_{1_T}(\zeta)x < x $. Then we have $ d = - \frac{L}{3} \gamma_{1_T}(\zeta)x \big/ \epsilon < 1 $.
		Thus, the remainder of series \eqref{Ex13} is of $ O(nd^n) $, where $ d<1 $, and $n>1$ an integer.
	\end{proof}

We note that the value function in \cite{Kumar} is given by
	\begin{align} \label{UR}
		U^R(t,x,y) 
		= U_{_T}(x) - (T-t)\frac{\lambda^2(y)}{2} \frac{\big[U_{_T}'(x) \big]^2}{ U_{_T}''(x)}.
	\end{align}
	Including the jump term to the formula for the value function (as discussed in \cite{Kumar} and \cite{Fouque}), we obtain the benchmark value function
	\begin{align} \label{benchmark1}
		U^B(t, x, y) = -\frac{1}{2x^2}e^{\frac{\gamma}{\gamma+[1-\gamma]\rho^2}[yA(t,T)+B(t,T)]}
		+ \frac{T-t}{2} \Big[\int_\mathbb{R} \gamma_{1_T}(\zeta) \nu(d\zeta)\Big]^2 \frac{\big[U_{_T}'(x) \big]^2}{ U_{_T}''(x)},
	\end{align}
where 
\begin{equation*}
A(t, T)= \frac{(1-e^{-\alpha(T-t)})a_{-}}{1-\frac{a_{-}}{a_{+}}e^{-\alpha(T-t)}}, \quad  B(t,T)= m\left((T-t)a_{-}- \frac{2}{\beta^2} \log \left(\frac{1- \frac{a_{-}}{a_{+}}e^{-\alpha(T-t)} }{1- \frac{a_{-}}{a_{+}}} \right) \right),
\end{equation*}
where the positive and negative roots of $f(r)=0$ are denoted as $a_{+}$ and $a_{-}$, where $$f(r)= \frac{\beta^2}{2}r^2 + \left(\frac{(1-\gamma)\beta \mu \rho- \gamma}{\gamma}\right)r + \frac{(\gamma+(1-\gamma)\rho^2)(1-\gamma)\mu^2}{2 \gamma^2}.$$ 
We use \cite{Kumar} for the values of constants for our subsequent numerical analysis. At this point, we consider specific L\'evy processes as examples. If 
	\begin{align} \label{ldf11}
		\nu(d\zeta) = \kappa \zeta^{-1} e^{-\theta \zeta} d\zeta,
	\end{align}
	with $ \zeta \in (0, \infty) $, where $ t \kappa > 0 $ and $ \theta>0 $ are the shape and the rate of Gamma distribution with probability density function
	$f(\zeta) = \frac{\theta^{t \kappa}}{\Gamma(t \kappa)} \zeta^{t \kappa-1} e^{-\theta \zeta}$, then we let
	\begin{align} \label{ldf12}
		\gamma_{1_T}(\zeta) = \frac{\theta^{t \kappa}}{\kappa \Gamma(t \kappa)} \zeta^{t \kappa}.
	\end{align}
	If
	\begin{align} \label{ldf21}
		\nu(d\zeta) = nt \sqrt{\frac{1}{2\pi \zeta^3}} \exp \Big(-\frac{m^2 \zeta}{2}\Big) d\zeta,
	\end{align}
	with $ \zeta \in (0, \infty) $, where $ m >0 $ and $ n>0 $ are the parameters of inverse Gaussian distribution with probability density function
	\begin{align*} 
		f(\zeta) 
		= nt \exp(mnt) \sqrt{\frac{1}{2\pi \zeta^3}} \exp \Big(-\frac{n^2t^2\zeta^{-1}+m^2 \zeta}{2}\Big),
	\end{align*}
	then we let
	\begin{align} \label{ldf22}
		\gamma_{1_T}(\zeta) = \exp(mnt) \exp \Big(-\frac{n^2t^2\zeta^{-1}}{2}\Big).
	\end{align}
	We substitute \eqref{ldf11} and \eqref{ldf12} into equations \eqref{Ex1} and \eqref{benchmark1}, or substitute \eqref{ldf21} and \eqref{ldf22} into equations \eqref{Ex1} and \eqref{benchmark1},  then obtain the approximating value function
	\begin{align} \label{Ex11}
		\hat{U}(t,x,y) 
		\approx U_{_T}(x) - (T-t)\frac{\big[\lambda^2(y)-1\big]}{2} \frac{\big[U_{_T}'(x) \big]^2}{ U_{_T}''(x)},
	\end{align}
	and the benchmark value function
	\begin{align} \label{benchmark11}
		U^B(t, x, y) = -\frac{1}{2x^2}e^{\frac{\gamma}{\gamma+[1-\gamma]\rho^2}[yA(t,T)+B(t,T)]}
		+ \frac{(T-t)}{2} \frac{\big[U_{_T}'(x) \big]^2}{ U_{_T}''(x)}.
	\end{align}
	Referring to Section 5 in \cite{Kumar}, we let $ \lambda^2(y) = 0.183732 $ and $ U_{_T}(x) = - \frac{1}{2} x^{-2} $. The value function \eqref{UR} (as provided in \cite{Kumar}), approximating value function \eqref{Ex11}, benchmark value function \eqref{benchmark11} and respective errors are summarized in Table 1, and are graphed in Figures 1, 2, 3 and 4, respectively.
	\begin{center}
		Table 1 \\
		\begin{tabular}{c c c c c c c c}
			\hline 
			$t$ & $T$ & $ U^B(t, x, y) $ & $ \hat{U}(t,x,y)  $ & $ U^R(t,x,y)  $ & $ \big|U^B - \hat{U}\big| $ & $ \big|U^B - U^R\big| $  \\
			\hline \\
			1.5 & 2 & $ \approx -\frac{0.568355}{x^2} $ & $ \approx -\frac{0.568022}{x^2} $ & $ \approx -\frac{0.484689}{x^2} $ & $ \approx \frac{0.000333}{x^2} $ & $ \approx \frac{0.083666}{x^2} $ \\
			\\
			1.9 & 2 & $ \approx -\frac{0.513619}{x^2} $ & $ \approx -\frac{0.513605}{x^2} $ & $ \approx -\frac{0.496938}{x^2} $ & $ \approx \frac{0.000014}{x^2} $ & $ \approx \frac{0.016681}{x^2} $ \\
			\\
			\hline 
		\end{tabular}
	\end{center}
	\begin{figure}[htbp]
		\centering
		\begin{minipage}{\textwidth}
			\centering
			\includegraphics[width=0.7\textwidth]{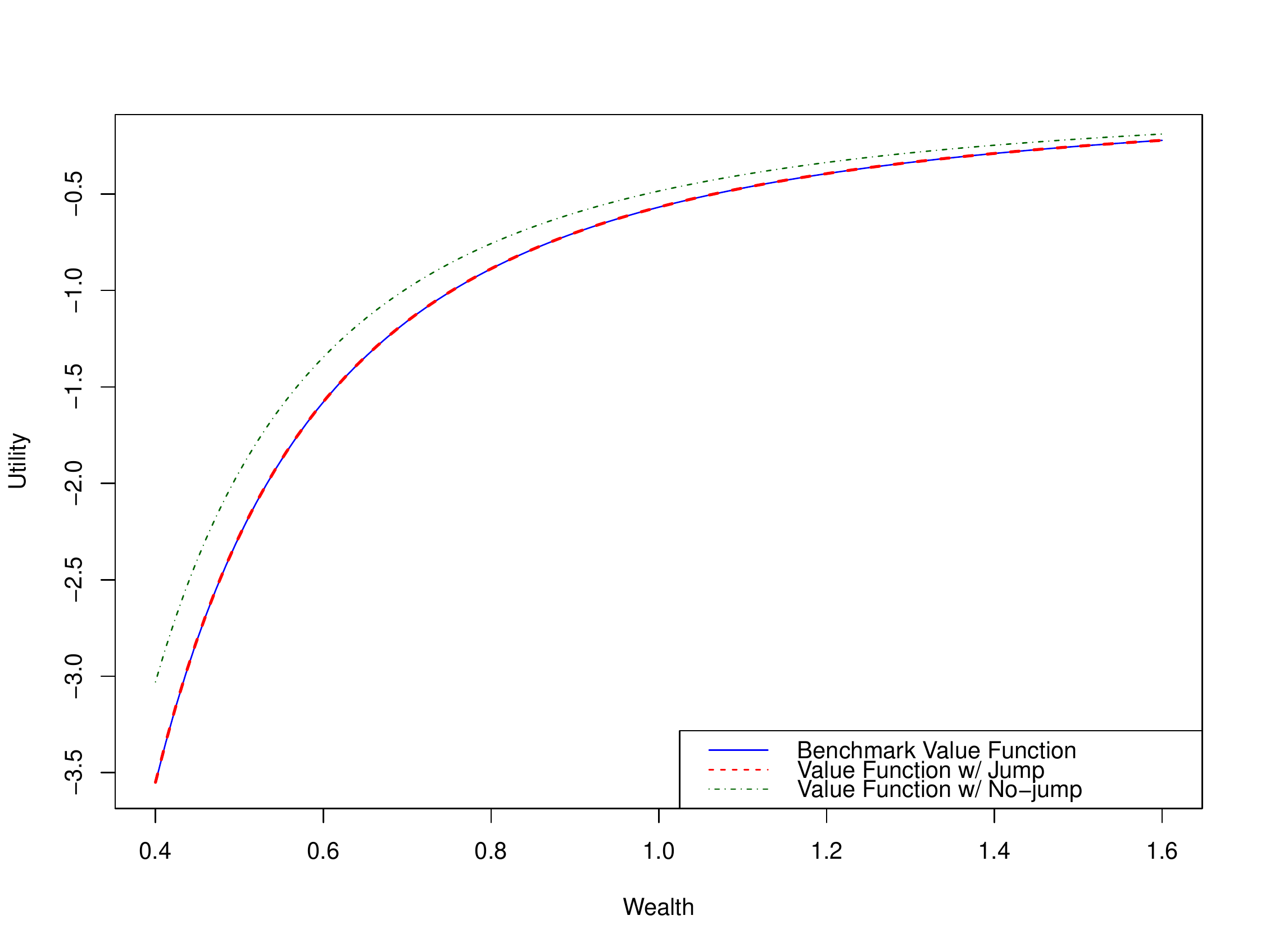}
			\caption{($t = 1.5$, $T = 2$) The benchmark value function is plotted against the approximating value function with jump and
				the value function in \cite{Kumar}. It is difficult to distinguish between the benchmark value function and the approximating value function with jump.}
		\end{minipage}
		\begin{minipage}{\textwidth}
			\centering
			\includegraphics[width=0.7\textwidth]{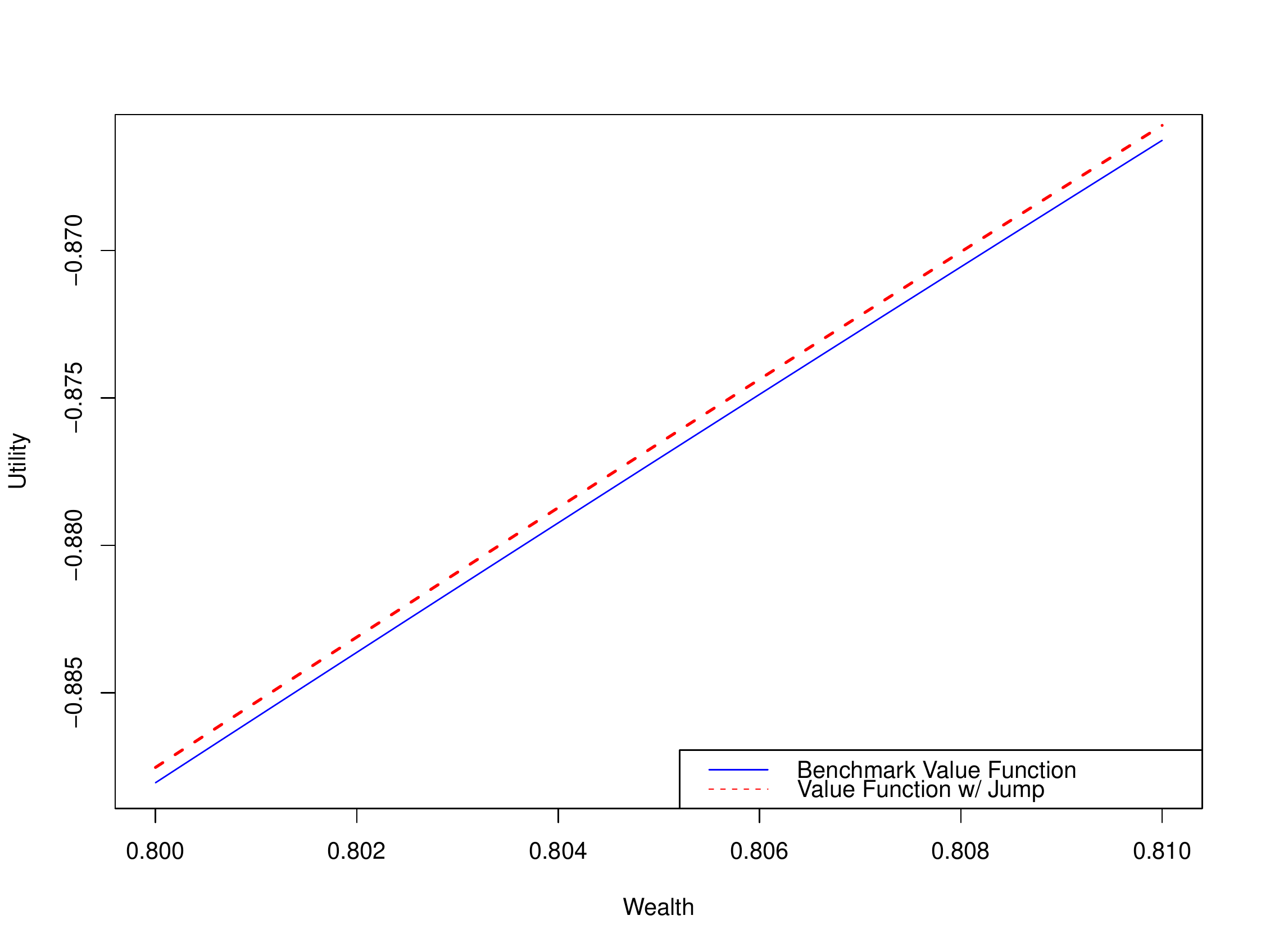}
			\caption{($t = 1.5$, $T = 2$) When Figure 1 is
				zoomed in over a shorter wealth interval, difference between the approximating value function with jump and the benchmark value function is more apparent.}
		\end{minipage}
	\end{figure}
	\begin{figure}[htbp]
		\centering
		\begin{minipage}{\textwidth}
			\centering
			\includegraphics[width=0.7\textwidth]{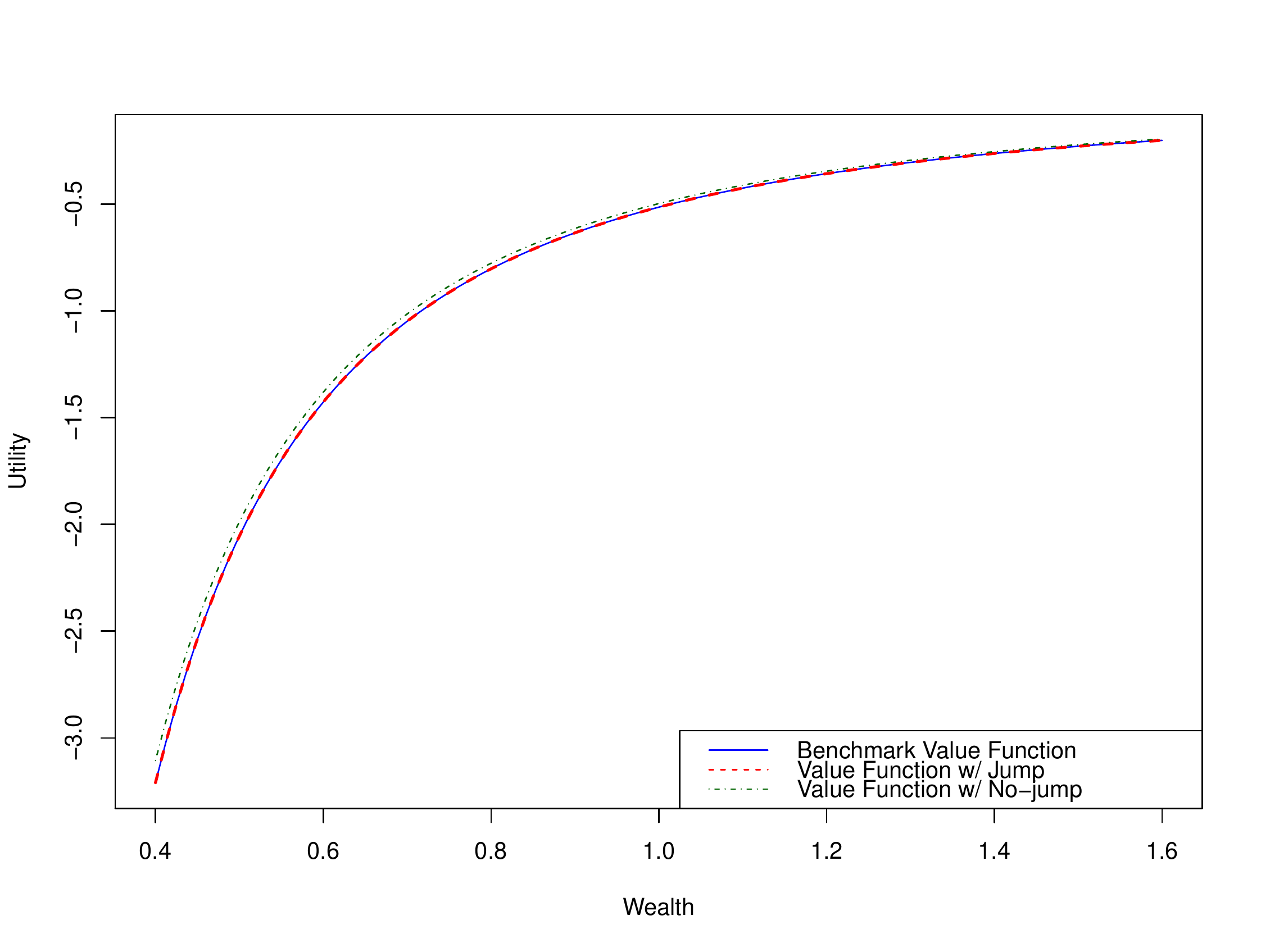}
			\caption{($t = 1.9$, $T = 2$) The benchmark value function is plotted against the approximating value function with jump and
				the value function in \cite{Kumar}. It is difficult to distinguish between the benchmark value function and the approximating value function with jump.}
		\end{minipage}
		\begin{minipage}{\textwidth}
			\centering
			\includegraphics[width=0.7\textwidth]{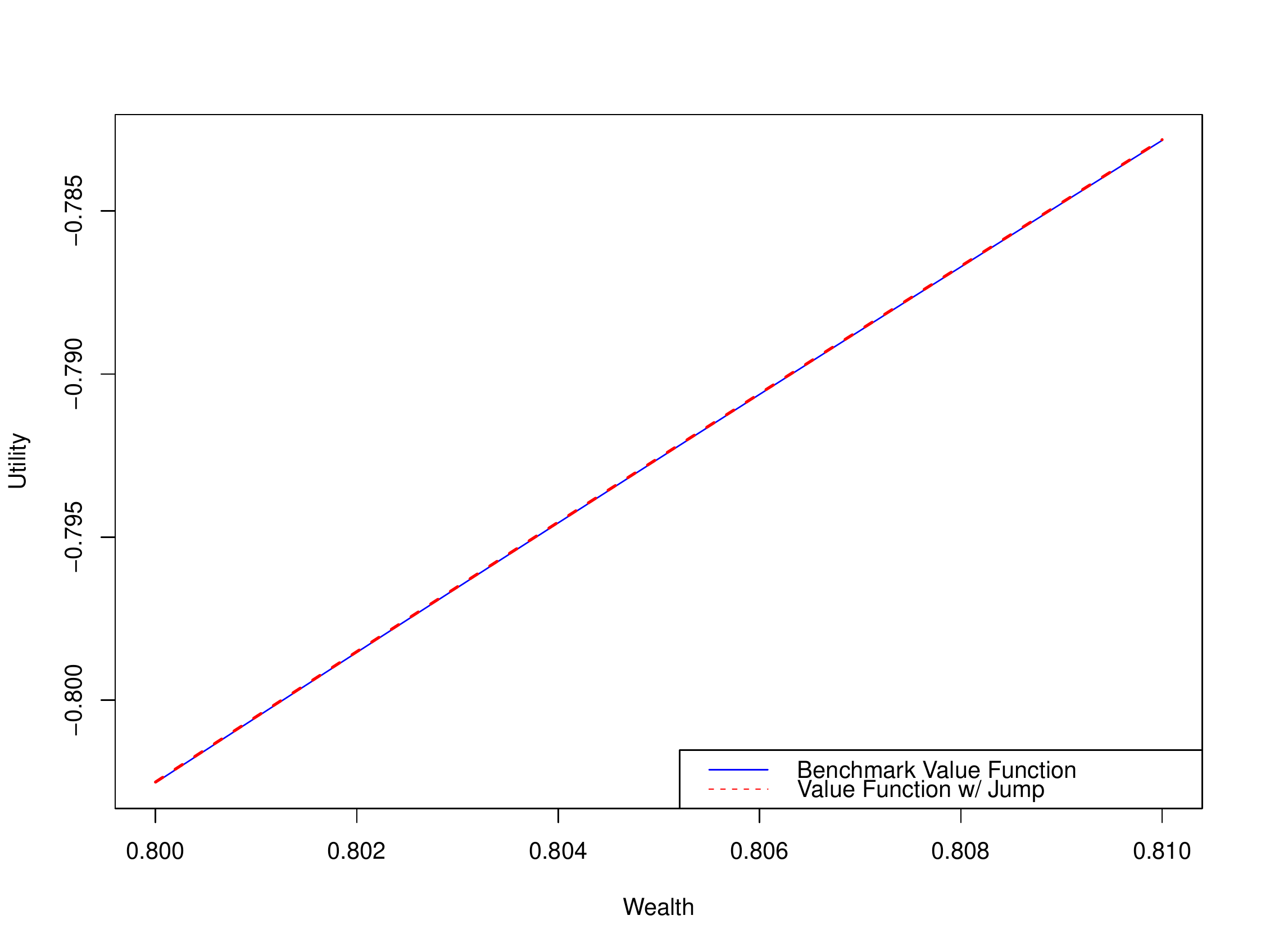}
			\caption{($t = 1.9$, $T = 2$) When time interval $ T-t $ is shortened from $ 0.5 $ to $ 0.1 $, the approximating value function with jump is much closer to the benchmark value function.}
		\end{minipage}
	\end{figure}

\section{Approximating portfolio}
\label{sec5}

In this section, we generate a close-to-optimal portfolio near the time to horizon $ T-t $ by the first-order approximation of utility function. The generated portfolio yields an expected utility function close to the maximum expected utility. We also control the error by the square of the time to the horizon $ (T-t)^2 $.

	\begin{theorem}
		Let $ \tau \in [t, T] $. If the total wealth $ \hat{X}_\tau $ is given by
		\begin{align*}
			d\hat{X}_\tau = \sigma(Y_\tau) \tilde{\pi}(\tau,\hat{X}_\tau,Y_\tau) \Big[ \lambda(Y_\tau)d\tau + dW_\tau^{(1)} + \int_\mathbb{R} \gamma_1(\tau,\zeta) \tilde{N}(d\tau,d\zeta) \Big], 
		\end{align*}
		where
		\begin{align*}
			\tilde{\pi}(\tau,\hat{X}_\tau,Y_\tau)
			= \
			& \frac{ -\lambda(Y_\tau)\hat{U}_{\hat{X}_\tau}(\tau,\hat{X}_\tau,Y_\tau) - \rho a(Y_\tau) \hat{U}_{\hat{X}_\tau Y_\tau}(\tau,\hat{X}_\tau,Y_\tau)}
			{\sigma(Y_\tau)\hat{U}_{\hat{X}_\tau \hat{X}_\tau}(\tau,\hat{X}_\tau,Y_\tau)} \\
			& + \frac{\hat{U}_{\hat{X}_\tau}(\tau,\hat{X}_\tau,Y_\tau) }
			{\sigma(Y_\tau)\hat{U}_{\hat{X}_\tau \hat{X}_\tau}(\tau,\hat{X}_\tau,Y_\tau)} \int_\mathbb{R}  \gamma_1(\tau, \zeta) \nu(d\zeta),
		\end{align*}
		with utility function $ \hat{U} $ given by \eqref{U_hat}. Then, there exists constant $ c > 0 $ and $ 0 < \varepsilon < \min\{1, T\} $ such that
		\begin{multline*}
			\big| J(t,x,y) - \mathbb{E}\big( U_{_T}(\hat{X}_{_T}) \big| \hat{X}_t = x, Y_t = y \big) \big| \leq c (T-t)^2f(x) \\ \text{for }  (t,x,y) \in (T-\varepsilon,T)\times \mathbb{R}^+ \times \mathbb{R},
		\end{multline*}
		where $ J(t,x,y) $ is value function defined by \eqref{J}; $ f(x) = 1 $ under Case 1 of Assumption \ref{Assume_UT} and $ f(x) = x^{1-\alpha} + x^{1-\beta} $ under Case 2 of Assumption \ref{Assume_UT}; the constants $ c $ and $\varepsilon$ are independent of $ t $, $ x $ and $ y $.
	\end{theorem}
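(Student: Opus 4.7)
The plan is to reduce the problem to Theorem~\ref{thm} plus one new estimate via the triangle inequality,
$$\big|J(t,x,y) - \mathbb{E}\big(U_{_T}(\hat{X}_{_T})\big|\hat{X}_t=x, Y_t=y\big)\big| \leq \big|J(t,x,y) - \hat{U}(t,x,y)\big| + \big|\hat{U}(t,x,y) - \mathbb{E}\big(U_{_T}(\hat{X}_{_T})\big|x,y\big)\big|.$$
The first term on the right is already controlled by $c(T-t)^2 f(x)$ thanks to inequality~\eqref{ineq} of Theorem~\ref{thm}, so the entire argument reduces to showing that the second term is of the same order.

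To estimate the second term, I would mimic Step~2 of the proof of Theorem~\ref{thm}. Apply the two-dimensional It\^o formula to $\hat{U}(\tau, \hat{X}_\tau, Y_\tau)$ on $[t,T]$, introduce a localizing sequence of stopping times $\{t_n\}_{n=1}^\infty$ with $t_n \nearrow T$ so that the integrals against $dW^{(1)}$, $dW^{(2)}$ and $\tilde{N}$ on $[t,t_n]$ become true martingales, and then take conditional expectations. Because $\tilde{\pi}$ is by construction the first-order critical point of the HJB maximand associated with $\hat{U}$, and that maximand is strictly concave in $\pi$ (with quadratic coefficient $\tfrac{1}{2}\sigma^2(y)\hat{U}_{xx}<0$), $\tilde{\pi}$ is in fact the unique maximizer and the drift integrand collapses to $\hat{U}_\tau + \mathcal{H}(\hat{U})$, yielding
$$\mathbb{E}\big(\hat{U}(t_n,\hat{X}_{t_n},Y_{t_n})\big|x,y\big) - \hat{U}(t,x,y) = \mathbb{E}\bigg[\int_t^{t_n} \big[\hat{U}_\tau + \mathcal{H}(\hat{U})\big](\tau,\hat{X}_\tau,Y_\tau)\,d\tau \,\bigg|\, x,y\bigg].$$

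The core of the argument is the pointwise bound
$$\big|\hat{U}_\tau(\tau,x,y) + \mathcal{H}(\hat{U})(\tau,x,y)\big| \leq c(T-\tau)\, f(x), \qquad \tau \in (T-\varepsilon, T),$$
which is already implicit in Step~1 of Theorem~\ref{thm}. Indeed, substituting $\hat{U} = U_{_T}(x)+(T-\tau)U^{(1)}(x,y)$ (i.e.\ with the $\delta^2 U^{(2)}$ correction absent) into $\hat{U}_\tau + \mathcal{H}(\hat{U})$ and collecting powers of $\delta = T-\tau$, the $\delta^0$ coefficient vanishes by the defining relation \eqref{U1} of $U^{(1)}$, while the residual $\delta^1$ coefficient is precisely the combination of terms that make up $U^{(2)}$ in \eqref{U^2}, each component of which is bounded by a constant multiple of $f(x)$ via Lemma~\ref{A.0}. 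Higher-order terms in $\delta$ are uniformly dominated for $\tau$ close enough to $T$, which is exactly the source of the restriction $\tau \in (T-\varepsilon, T)$.

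Integrating this pointwise bound gives $\int_t^{t_n} c(T-\tau)\,\mathbb{E}[f(\hat{X}_\tau)|x,y]\,d\tau \leq c(T-t)^2 f(x)$ after an $\mathbb{E}[f(\hat{X}_\tau)|x,y]$ estimate analogous to Lemma~\ref{A.1}, and passing to the limit $n \to \infty$ by dominated convergence (with the uniform domination established exactly as in Step~2 of Theorem~\ref{thm}) completes the bound on the second term of the triangle inequality. The main anticipated obstacle is verifying admissibility of $\tilde{\pi}$ in the sense of Definition~\ref{def_1}, i.e.\ the analogue of Lemma~\ref{A.2} with $\hat{U}$ in place of $\overline{U}$; but since $\tilde{\pi}$ has the same structural closed form as $\overline{\hat{\pi}}$ from Step~2 of Theorem~\ref{thm}, the same moment estimates carry over with only cosmetic changes. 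Combining both halves of the triangle inequality then yields the stated bound.
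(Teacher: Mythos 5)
Your proposal is correct and follows essentially the same route as the paper: localize, apply It\^o's formula to $\hat{U}(\tau,\hat{X}_\tau,Y_\tau)$, use the martingale property so only the drift $\hat{U}_\tau + \mathcal{H}(\hat{U})$ survives, bound it by $O(T-\tau)f(\cdot)$ because $U^{(1)}$ kills the zeroth-order coefficient (the residual first-order coefficient being exactly the $U^{(2)}$-type terms controlled by Lemma \ref{A.0}), integrate and pass to the limit by dominated convergence via Lemma \ref{A.1}, check admissibility of $\tilde{\pi}$ as in Lemma \ref{A.2}, and finish with the triangle inequality against \eqref{ineq}. The only cosmetic difference is that the paper cites inequality (4.3) of \cite{Kumar} for the expectation estimate $\mathbb{E}\big(f(\hat{X}_\tau)\,\big|\,x,y\big)\lesssim f(x)$ that you flag as an analogue of Lemma \ref{A.1}, and it applies the triangle inequality at the end rather than at the outset.
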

	\begin{proof}
		We denote $ \lambda = \lambda(Y_\tau) $, $ \sigma = \sigma(Y_\tau) $, $ b = b(Y_\tau) $, $ a = a(Y_\tau) $, $ \tilde{\pi} = \tilde{\pi}(\tau,\hat{X}_\tau,Y_\tau) $ and $ \hat{U} = \hat{U}(\tau, \hat{X}_\tau, Y_\tau) $. We let $ \{t_n \}_{n=1}^\infty \subset [t, T] $ be a sequence of stopping times such that $ t_n \leq t_{n+1} $ and $ t_n \rightarrow T $. 
		
		We apply the 2-dimensional It\^{o}'s formula to $ \hat{U} $ and obtain
		\begin{align}
			& \hat{U}(t_n, \hat{X}_{t_n}, Y_{t_n}) - \hat{U}(t,x,y)  \nonumber\\
			& = \int_t^{t_n} \Big[ \partial_\tau \hat{U} + \lambda\sigma\tilde{\pi}\hat{U}_{\hat{X}_\tau} + b\hat{U}_{Y_\tau}
			+ \frac{1}{2}\sigma^2\tilde{\pi}^2\hat{U}_{\hat{X}_\tau \hat{X}_\tau} + \rho a\sigma\tilde{\pi} \hat{U}_{\hat{X}_\tau Y_\tau} +\frac{1}{2}a^2\hat{U}_{Y_\tau Y_\tau} \Big] d\tau \label{HJB_5} \\
			& \quad
			+ \int_t^{t_n} \Big[ \sigma\tilde{\pi}\hat{U}_{\hat{X}_\tau} + \rho a \hat{U}_{Y_\tau} \Big] dW^{(1)}_\tau
			+ \int_t^{t_n} \Big[ a\sqrt{1-\rho^2}\hat{U}_{Y_\tau} \Big] dW^{(2)}_\tau \label{mart_5} \\
			& \quad
			+ \int_t^{t_n} \int_\mathbb{R} \Big[ \hat{U}\big(\tau, \hat{X}_\tau+ \sigma\tilde{\pi}\gamma_1(\tau,\zeta), Y_\tau+\gamma_2(\tau,\zeta)\big) - \hat{U} - \sigma\tilde{\pi}\gamma_1(\tau,\zeta)\hat{U}_{\hat{X}_\tau} \nonumber\\
			& \qquad \qquad \qquad
			- \gamma_2(\tau,\zeta)\hat{U}_{Y_\tau} \Big] \nu(d\zeta) d\tau \label{HJB_6}\\
			& \quad
			+ \int_t^{t_n} \int_\mathbb{R} \Big[ \hat{U}\big(\tau, \hat{X}_\tau+ \sigma\tilde{\pi}\gamma_1(\tau,\zeta), Y_\tau+\gamma_2(\tau,\zeta)\big) - \hat{U} \Big]\tilde{N}(d\tau,d\zeta), \label{mart_6} 
		\end{align}
		where the stochastic integrals \eqref{mart_5} and \eqref{mart_6} are martingales. We note that the integrand of \eqref{HJB_5} $ + $ \eqref{HJB_6} is exactly the $ \hat{U}_\tau + \mathcal{H}(\hat{U}) $. 
		Because $ U^{(1)} $ is solved by ignoring the terms of order $ O(1) $, therefore $ \big| \hat{U}_\tau + \mathcal{H}(\hat{U}) \big| = O(T-\tau)O\big(f(\hat{X}_\tau)\big) $ under Assumption \ref{Assume_UT}.
		By the martingale property that conditional expectation of martingale is zero, we have
		\begin{multline*}
			\mathbb{E}\big( \hat{U}(t_n, \hat{X}_{t_n}, Y_{t_n}) \big| \hat{X}_t = x, Y_t = y \big) - \hat{U}(t,x,y) \\
			= \int_t^{t_n} \mathbb{E} \Big( O(T-\tau)O\big(f(\hat{X}_\tau)\big) \Big| \hat{X}_t = x, Y_t = y \Big) d\tau.
		\end{multline*}
		Then referring to inequality (4.3) in \cite{Kumar}, we have
		\begin{align} \label{4.3}
			\Big| \mathbb{E}\big( \hat{U}(t_n, \hat{X}_{t_n}, Y_{t_n}) \big| \hat{X}_t = x, Y_t = y \big) - \hat{U}(t,x,y) \Big|  \leq c_1 (T-t)^2 f(x),
		\end{align}
		with some constant $ c_1>0 $.
		
		We apply equation \eqref{U_hat} and triangle inequality to $ \hat{U}(t_n, \hat{X}_{t_n}, Y_{t_n}) $ and obtain
		\begin{align*}
			\big| \hat{U}(t_n, \hat{X}_{t_n}, Y_{t_n}) \big|
			& = \big| U_{_T}(\hat{X}_{t_n}) + (T-t_n)U^{(1)}(\hat{X}_{t_n},Y_{t_n}) \big| \\
			& \leq \big| U_{_T}(\hat{X}_{t_n}) \big| + \big| T U^{(1)}(\hat{X}_{t_n},Y_{t_n}) \big| \\
			& \leq c_2 g(\hat{X}_{t_n}),
		\end{align*}
		with some constant $ c_2 $. We recall equation \eqref{sim3} showing $ U^{(1)} \sim f(\hat{X}_{t_n}) $. Then we have $ g(\hat{X}_{t_n}) = U_{_T}(\hat{X}_{t_n}) + f(\hat{X}_{t_n}) = \ln(\hat{X}_{t_n}) +1 $ under Case 1 of Assumption \ref{Assume_UT}.
		In addition, we recall equation \eqref{sim12} showing $ U_{_T}(\hat{X}_{t_n}) \sim f(\hat{X}_{t_n}) $ under Case 2 of Assumption \ref{Assume_UT}. Thus we have $ g(\hat{X}_{t_n}) = f(\hat{X}_{t_n}) = \hat{X}_{t_n}^{1-\alpha} + \hat{X}_{t_n}^{1-\beta} $ under this case. 
		
		With substituting $ \hat{X}_{t_n} $ into $ X_{t_n} $, the Lemma \ref{A.1} in the Appendix \ref{apendix} also proves that $ \big\{ g(\hat{X}_{t_n}) \big\}_{n=1}^\infty $ is uniformly bounded by an integrable random variable.  We already have $ \big| \hat{U}(t_n, \hat{X}_{t_n}, Y_{t_n}) \big| \leq c_2 g(\hat{X}_{t_n}) $. And we observe that
		\begin{align*}
			t_n \rightarrow T\ \Rightarrow\ \hat{U}(t_n, \hat{X}_{t_n}, Y_{t_n}) \rightarrow \hat{U}(T, \hat{X}_{_T}, Y_{_T}) = U_{_T}(\hat{X}_{_T}).
		\end{align*}
		By Dominated Convergence Theorem, we have
		\begin{align} \label{lim2}
			\lim_{n \rightarrow \infty} \mathbb{E}\big( \hat{U}(t_n, \hat{X}_{t_n}, Y_{t_n}) \big| \hat{X}_t = x, Y_t = y \big) = \mathbb{E}\big(U_{_T}(\hat{X}_{_T}) \big| \hat{X}_t = x, Y_t = y \big).
		\end{align}
		Combing \eqref{4.3} with \eqref{lim2}, we have
		\begin{align} \label{4.5}
			\Big| \mathbb{E}\big(U_{_T}(\hat{X}_{_T}) \big| \hat{X}_t = x, Y_t = y \big) - \hat{U}(t,x,y) \Big|  \leq c_1 (T-t)^2 f(x).
		\end{align}
		By \eqref{ineq}, \eqref{4.5} and triangle inequality, there exists constant $ c > 0 $ and $ 0 < \varepsilon < \min\{1, T\} $ such that
		\begin{align*}
			& \big| J(t,x,y) - \mathbb{E}\big( U_{_T}(\hat{X}_{_T}) \big| \hat{X}_t = x, Y_t = y \big) \big| \\
			& \leq \big| J(t,x,y) - \hat{U}(t,x,y) \big| + \big| \hat{U}(t,x,y) - \mathbb{E}\big( U_{_T}(\hat{X}_{_T}) \big| \hat{X}_t = x, Y_t = y \big) \big| \\
			& \leq c(T-t)^2 f(x).
		\end{align*}
		for $ (t,x,y) \in (T-\varepsilon,T)\times \mathbb{R}^+ \times \mathbb{R} $.
	\end{proof}

\section{Portfolio optimization on a finite time horizon}
\label{sec6}

In this section, we approximate the value function for all times $ t \in [0,T] $. Using this approximation with optimal portfolio \eqref{pi},  we generate a close-to-optimal portfolio on $ [0, T] $. To start, we partition the interval $ [0, T] $ into $ n $ subintervals: $ \{ 0 = t_0 < t_1 < \cdots < t_{n-1} < t_n = T \} $. For $ t_k \leq t \leq t_{k+1} $, $ k = 0, \cdots, n-1 $, the approximation scheme is given by 
	\begin{align*}
		\hat{U}(t, x, y) := \hat{U}(t_{k+1}, x, y) + (t_{k+1} - t)\mathcal{H}\big(\hat{U}(t_{k+1}, x, y)\big),
	\end{align*}
	where
	\begin{align*}
		& \mathcal{H}\big(\hat{U}(t_{k+1}, x, y)\big) \\
		& = \lambda(y)\sigma(y)\tilde{\pi}(t_{k+1}, x, y)\hat{U}_x(t_{k+1}, x, y) + b(y)\hat{U}_y(t_{k+1}, x, y) \\ 
		& \quad
		+ \frac{1}{2}\sigma^2(y)\tilde{\pi}^2(t_{k+1}, x, y)\hat{U}_{xx}(t_{k+1}, x, y) +\frac{1}{2}a^2(y)\hat{U}_{yy}(t_{k+1}, x, y) \\
		& \quad
		+ \rho a(y)\sigma(y)\tilde{\pi}(t_{k+1}, x, y) \hat{U}_{xy}(t_{k+1}, x, y) \\
		& \quad
		+ \int_\mathbb{R} \Big[ \hat{U} \big(t_{k+1}, x+ \sigma(y)\tilde{\pi}(t_{k+1},x,y)\gamma_1(t_{k+1},\zeta), y+\gamma_2(t_{k+1},\zeta)\big) - \hat{U}(t_{k+1}, x, y) \\
		& \quad \quad \quad \quad
		- \sigma(y)\tilde{\pi}(t_{k+1}, x, y)\gamma_1(t_{k+1},\zeta) \hat{U}_x(t_{k+1}, x, y)
		- \gamma_2(t_{k+1},\zeta)\hat{U}_y(t_{k+1}, x, y)
		 \Big] \nu(d\zeta),
	\end{align*}
	with
	\begin{align*}
		\tilde{\pi}(t_{k+1},x,y)
		= \
		& \frac{-\lambda(y)\hat{U}_x(t_{k+1},x,y) - \rho a(y) \hat{U}_{xy}(t_{k+1},x,y)}{\sigma(y)\hat{U}_{xx}(t_{k+1},x,y)} \\
		& + \frac{\hat{U}_x(t_{k+1},x,y)}{\sigma(y)\hat{U}_{xx}(t_{k+1},x,y)} 
		\int_\mathbb{R}  \gamma_1(t_{k+1}, \zeta) \nu(d\zeta).
	\end{align*}
	The close-to-optimal portfolio is then given by
	\begin{align*}
		\tilde{\pi}(t,\hat{X}_t,Y_t)
		= \
		& \frac{ -\lambda(Y_t)\hat{U}_{\hat{X}_t}(t,\hat{X}_t,Y_t) - \rho a(Y_t) \hat{U}_{\hat{X}_t Y_t}(t,\hat{X}_t,Y_t)}
		{\sigma(Y_t)\hat{U}_{\hat{X}_t \hat{X}_t}(t,\hat{X}_t,Y_t)} \nonumber \\
		& + \frac{\hat{U}_{\hat{X}_t}(t,\hat{X}_t,Y_t) }
		{\sigma(Y_t)\hat{U}_{\hat{X}_t \hat{X}_t}(t,\hat{X}_t,Y_t)} \int_\mathbb{R}  \gamma_1(t, \zeta) \nu(d\zeta).
	\end{align*}
	
	We note that $ \int_\mathbb{R} \nu(d\zeta) = \infty $. To calculate $ \int_\mathbb{R} - \hat{U}(t_{k+1}, x, y) \nu(d\zeta) $, we consider to expand $ \hat{U} \big(t_{k+1}, x+ \sigma(y)\tilde{\pi}(t_{k+1},x,y)\gamma_1(t_{k+1},\zeta), y+\gamma_2(t_{k+1},\zeta)\big) $ at $ (x, y) $ using the first order of Taylor series:
	\begin{align*}
		& \hat{U} \big(t_{k+1}, x+ \sigma(y)\tilde{\pi}(t_{k+1},x,y)\gamma_1(t_{k+1},\zeta), y+\gamma_2(t_{k+1},\zeta)\big) \\
		& = \hat{U}(t_{k+1}, x, y) +  \sigma(y)\tilde{\pi}(t_{k+1}, x, y)\gamma_1(t_{k+1},\zeta)\hat{U}_x(t_{k+1}, x, y) + \gamma_2(t_{k+1},\zeta)\hat{U}_y(t_{k+1}, x, y) \\
		& \quad 
		+ O\big(\sigma^2(y)\tilde{\pi}^2(t_{k+1}, x, y)\gamma_1^2(t_{k+1},\zeta) + \gamma_2^2(t_{k+1},\zeta)\big).
	\end{align*}
	In addition, if $ \lambda(y) = \lambda$, is independent of $ y $, then $ \hat{U}(t, x, y) = \hat{U}(t, x) $ by terminal condition $ \hat{U}(T, x, y) = U_{_T}(x) $. Thus the approximation scheme is given by
	\begin{align*}
		\hat{U}(t, x) \approx \hat{U}(t_{k+1}, x) + 
		\frac{t_{k+1} - t}{2} 
		\bigg[ - \lambda^2 + \Big[ \int_\mathbb{R}  \gamma_1(t_{k+1}, \zeta) \nu(d\zeta)\Big]^2 \bigg]
		\frac{\big[ \hat{U}_x(t_{k+1},x) \big]^2}{\hat{U}_{xx}(t_{k+1},x)}.
	\end{align*}
	
	We let $ \gamma_1(t_{k+1}, \zeta) $ be either \eqref{ldf12} or \eqref{ldf22}, and obtain the approximation scheme
	\begin{align*}
		\hat{U}(t, x) \approx \hat{U}(t_{k+1}, x) + 
		(t_{k+1} - t)
		\frac{\big[- \lambda^2 + 1\big]}{2}
		\frac{\big[ \hat{U}_x(t_{k+1},x) \big]^2}{\hat{U}_{xx}(t_{k+1},x)}.
	\end{align*}
	We let $ \gamma_1(t, \zeta) $ be either \eqref{ldf12} or \eqref{ldf22}, and obtain the close-to-optimal portfolio
	\begin{align*}
		\tilde{\pi}(t,x,y)
		= \
		& \frac{ [-\lambda +1] \hat{U}_x(t,x)}
		{\sigma(y)\hat{U}_{xx}(t,x)}.
	\end{align*}
	Referring to Section 5 in \cite{Kumar}, we let $ T=2 $, $ \lambda^2 = 0.183732 $, $ U_{_T}(x) = - \frac{1}{2} x^{-2} $ and $ \sigma(y) = y^{-1/2} $ with $ y = 27.9345 $. The close-to-optimal portfolio is calculated as $ \tilde{\pi} = -1.00605x $ at time $ t=0 $.

\section{Conclusion}
\label{sec7}

In this paper, we consider the finite horizon portfolio optimization in a L\'evy-process-setting where the stochastic volatility portfolio process is driven by a standard Brownian motion and a jump term. The value function is approximated using the polynomial expansion method with respect to time to the horizon ($T-t$). We obtain an approximate solution for the value function and optimal investment strategy. It is shown that the first-order approximations of the value function and optimal investment strategy perform better than the existing models such as \cite{Kumar}. It is shown that the first-order term in the value function approximation can always be expressed in terms of the zeroth-order term and its derivative. For certain utility functions, it is shown that the convergence is linear. For certain utility functions, it is shown that the remainder of approximation is of the order of $O( nd^n) $, where $ d<1 $, and $n>1$ an integer.

Based on our approximation, we also generate a close-to-optimal portfolio near the time to horizon $ (T - t) $. We provide an approximation scheme to the value function for all times $ t \in [0, T] $ and generate the close-to-optimal portfolio on $ [0, T] $. The accuracy of such approximation can be accomplished by a similar procedure used in Section \ref{sec3} and will be rigorously proved in a sequel of this work.

\appendix

\section{Appendix}
\label{apendix}

	\begin{lemma} \label{A.0}
		Known that $ U^{(2)}(x,y) $ is given by \eqref{U^2}. If take $ U_{_T}(\hat{\chi}) - U_{_T}(x) $ as one term and enumerate all the terms of $ U^{(2)}(x,y) $ as $ u_1^{(2)}(x,y) , \cdots , u_l^{(2)}(x,y) $, i.e.\ $ U^{(2)}(x,y) = \sum_{i=1}^l u_i^{(2)}(x,y) $, then
		\begin{align*}
			u_i^{(2)}(x,y) \sim f(x),
		\end{align*}
		for $ 1\leq i \leq l $, where $ f(x) = 1 $ under Case 1 of Assumption \ref{Assume_UT} and $ f(x) = x^{1-\alpha} + x^{1-\beta} $ under Case 2 of Assumption \ref{Assume_UT}.
	\end{lemma}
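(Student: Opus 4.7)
The plan is to reduce the lemma to a careful bookkeeping of how each factor in \eqref{U^2} scales in $x$. Every term of $U^{(2)}$ is, after collecting the $\zeta$-integrals, a product of three sorts of factors: (i) coefficients $\lambda,\lambda',\lambda'',a,b,\rho$, which are bounded by Assumption \ref{aiai}; (ii) L\'evy integrals of the $\gamma$'s, finite by Definition \ref{def_1}; and (iii) rational combinations of $U_{_T}^{(k)}(x)$ for $k=0,\ldots,4$ together with $U^{(1)}(x,y)$ and its partial derivatives. Only the last group depends on $x$, so the claim reduces to verifying that each such combination scales like $f(x)$ uniformly in $y$.

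First I would tabulate the scalings of $U_{_T}^{(k)}(x)$ under each case of Assumption \ref{Assume_UT} and feed them into the elementary identities \eqref{key1'}--\eqref{key2''}. Under Case 1 one gets $U_{_T}'/U_{_T}''=-x$ and $[U_{_T}']^2/U_{_T}''=-1$, with every further derivative in \eqref{key1'}--\eqref{key2''} collapsing to a constant or zero. Under Case 2 a direct monomial calculation yields $U_{_T}'/U_{_T}''\sim x$, $[U_{_T}']^2/U_{_T}''\sim x^{1-\alpha}+x^{1-\beta}=f(x)$, and $([U_{_T}']^2/U_{_T}'')''\sim U_{_T}''(x)$, with analogous clean bounds for the other ratios. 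In both cases the central building block $[U_{_T}']^2/U_{_T}''$ already scales like $f(x)$, and this order is preserved by the differentiations showing up in \eqref{U1x}--\eqref{U_xy} and \eqref{U^2}.

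Next I would propagate these scalings through the explicit expressions \eqref{U^1} and \eqref{U1x}--\eqref{U_xy}. This yields the auxiliary relations $U^{(1)}\sim f(x)$, $U^{(1)}_x\sim U_{_T}'(x)$, $U^{(1)}_{xx}\sim U_{_T}''(x)$, $U^{(1)}_y,U^{(1)}_{yy}\sim f(x)$, and $U^{(1)}_{xy}\sim U_{_T}'(x)$, which are precisely the scalings invoked in the proof of Theorem \ref{thm}. The only non-algebraic ingredient is the integral $\int[U_{_T}(\hat{\chi})-U_{_T}(x)]\nu(d\zeta)$: I would Taylor-expand $U_{_T}$ around $x$ using $\hat{\chi}-x=-L(U_{_T}'/U_{_T}'')\gamma_{1_T}$, absorb the linear piece into the compensating subtraction already present in \eqref{U^1}, and bound the quadratic remainder via the Definition \ref{def_1} integrability conditions together with $[U_{_T}'/U_{_T}'']^2\,U_{_T}''(x)\sim [U_{_T}']^2/U_{_T}''\sim f(x)$.

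With those scalings in hand the enumeration of \eqref{U^2} is routine: the first bracket is a bounded linear combination of $U^{(1)}_y$, $U^{(1)}_{yy}$ and a $\nu$-integral of differences of $U^{(1)}$, each $\sim f(x)$; the second and fourth brackets contain the bounded ratio $U^{(1)}_{xx}/U_{_T}''(x)$ multiplied by factors already known to be $\sim f(x)$; and the third bracket reduces to $(U_{_T}'/U_{_T}'')\cdot U_{_T}'\sim f(x)$. The main obstacle I anticipate is controlling the second-order Taylor remainders inside the $\zeta$-integrals uniformly in $\zeta$: one must argue that the intermediate points between $x$ and $\hat{\chi}$ remain in the regime where the power-type (or logarithmic) asymptotics of $U_{_T}$ apply, so that the comparison constants hidden in $\sim$ can be chosen independently of $\zeta$ and $y$. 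This is arranged by combining the concavity of $U_{_T}$ with the boundedness of $L$ and of $U_{_T}'/U_{_T}''$ on each compact piece, after which the bookkeeping closes and each $u_i^{(2)}(x,y)$ is pinned between positive multiples of $f(x)$ uniformly in $y$.
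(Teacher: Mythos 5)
Your proposal follows essentially the same route as the paper's proof: it establishes the scaling relations $[U_{_T}']^2/U_{_T}''\sim f(x)$, $U^{(1)}\sim f(x)$, $U^{(1)}_x, U^{(1)}_{xy}\sim U_{_T}'(x)$, $U^{(1)}_{xx}\sim U_{_T}''(x)$, $U^{(1)}_y, U^{(1)}_{yy}\sim f(x)$ through the identities \eqref{key1'}--\eqref{key2''} and then propagates them term by term through \eqref{U^2}, exactly as the paper does. The only local differences are your Taylor-expansion treatment of $\int\big[U_{_T}(\hat{\chi})-U_{_T}(x)\big]\nu(d\zeta)$ — the paper instead compares $U_{_T}(\hat{\chi})$ with $f(\hat{\chi})\sim f(x)$ directly in Case 2 and computes the logarithmic limit explicitly in Case 1 — and your explicit attention to uniformity in $\zeta$ and $y$, which the paper leaves implicit.
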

	\begin{proof}
		We denote $ L = \lambda(y) - \int_\mathbb{R} \gamma_{1_T}(\zeta) \nu(d\zeta) $. We let $ f(x) = 1 $ under Case 1 of Assumption \ref{Assume_UT}. Then
		\begin{align*}
			& \lim_{x \rightarrow \infty} \frac{\big[U_{_T}'(x)\big]^2 \big/ U_{_T}''(x)}{f(x)}
			= \lim_{x \rightarrow \infty} \frac{[x^{-1}]^2}{-x^{-2}} = -1,\\
			& \lim_{x \rightarrow \infty} \frac{U_{_T} (\hat{\chi}) - U_{_T}(x)}{f(x)} 
			= \lim_{x \rightarrow \infty} \ln \Big[ x + L \frac{x^{-1}}{x^{-2}} \gamma_{1_T}(\zeta) \Big] - \ln x = \ln \Big[ 1 + L \gamma_{1_T}(\zeta)  \Big], \\
			& \lim_{x \rightarrow \infty} \frac{f(\hat{\chi})}{f(x)} = 1.
		\end{align*}
		We let $ f(x) = x^{1-\alpha} + x^{1-\beta} $ under Case 2 of Assumption \ref{Assume_UT}. Then
		\begin{align}
			& \lim_{x \rightarrow \infty} \frac{\big[U_{_T}'(x)\big]^2 \big/ U_{_T}''(x)}{f(x)}
			= \lim_{x \rightarrow \infty} \frac{[c_1x^{-\alpha}+c_2x^{-\beta}]^2}{[-c_1\alpha x^{-\alpha} - c_2\beta x^{-\beta}][x^{-\alpha} + x^{-\beta}]} 
			= -\frac{c_1}{\alpha}\ \text{or} -\frac{c_2}{\beta},\nonumber \\
			& \lim_{x \rightarrow \infty} \frac{U_{_T}(x)}{f(x)} 
			= \lim_{x \rightarrow \infty} \frac{\frac{c_1}{1-\alpha}x^{1-\alpha}+\frac{c_2}{1-\beta}x^{1-\beta}}{x^{1-\alpha} + x^{1-\beta}}
			= \frac{c_1}{1-\alpha} \text{ or } \frac{c_2}{1-\beta},\label{sim12} \\
			& \lim_{x \rightarrow \infty} \frac{f(\hat{\chi})}{f(x)}
			= \lim_{x \rightarrow \infty}
			\frac{\hat{\chi}^{1-\alpha} + \hat{\chi}^{1-\beta}}{x^{1-\alpha} + x^{1-\beta}}
			= A^{1-\alpha} \text{ or }
			B^{1-\beta}, \nonumber
		\end{align}
		where
		\begin{align*}
			\hat{\chi} = x + L \frac{c_1x^{-\alpha}+c_2x^{-\beta}}{c_1\alpha x^{-\alpha-1} + c_2\beta x^{-\beta-1}} \gamma_{1_T},
			A = 1+\frac{L}{\alpha}\gamma_{1_T}, \text{and}\ B = 1+\frac{L}{\beta}\gamma_{1_T}.
		\end{align*}
		Thus under both Cases of Assumption \ref{Assume_UT}, we find $ f(\hat{\chi}) \sim f(x) $ and 
		\begin{align} \label{sim1}
			\frac{\big[U_{_T}'(x)\big]^2}{U_{_T}''(x)} \sim f(x).
		\end{align}
		Under Case 2 of Assumption \ref{Assume_UT}, there exists $ U_{_T}(x) \sim f(x) $ which imples $ U_{_T}(\hat{\chi}) \sim f(\hat{\chi}) $. We then combine $ U_{_T}(\hat{\chi}) \sim f(\hat{\chi}) $ with $ f(\hat{\chi}) \sim f(x) $ and obtain $ U_{_T}(\hat{\chi}) \sim f(x) $. Thus we have $ U_{_T} (\hat{\chi}) - U_{_T}(x) \sim f(x) $ under this case. In addition, there exists $ U_{_T} (\hat{\chi}) - U_{_T}(x) \sim f(x) $ under Case 1 of Assumption \ref{Assume_UT}. Thus under both Cases of Assumption \ref{Assume_UT}, we find
		\begin{align}\label{sim2}
			U_{_T} (\hat{\chi}) - U_{_T}(x) \sim f(x).
		\end{align}
		We apply \eqref{sim1} into equations \eqref{Uy} and \eqref{Uyy} and find
		\begin{align}\label{sim4}
			U^{(1)}_y \sim f(x) \text{ and } U^{(1)}_{yy} \sim f(x).
		\end{align}
		Also we apply \eqref{sim1} and \eqref{sim2} into equation \eqref{U^1} and find
		\begin{align}\label{sim3}
			U^{(1)}(x,y) \sim f(x),
		\end{align}
		which implies $ U^{(1)}(\hat{\chi},\psi) \sim f(\hat{\chi}) $. Combining $ U^{(1)}(\hat{\chi},\psi) \sim f(\hat{\chi}) $ with $ f(\hat{\chi}) \sim f(x) $, we find
		\begin{align}\label{sim5}
			U^{(1)}(\hat{\chi},\psi) \sim f(x).
		\end{align}
		
		Under Case 1 of Assumption \ref{Assume_UT}, we have
		\begin{align*}
			\lim_{x \rightarrow \infty} \frac{U_{_T}'(\hat{\chi})}{U_{_T}'(x)}
			= \lim_{x \rightarrow \infty}
			\frac{\hat{\chi}^{-1}}{x^{-1}} = A^{-1} \text{ or } B^{-1},
			\lim_{x \rightarrow \infty} \frac{U_{_T}''(\hat{\chi})}{U_{_T}''(x)}
			= \lim_{x \rightarrow \infty}
			\frac{\hat{\chi}^{-2}}{x^{-2}}
			= A^{-2} \text{ or } B^{-2}.
		\end{align*}
		Under Case 2 of Assumption \ref{Assume_UT}, we have
		\begin{align*}
			\lim_{x \rightarrow \infty} \frac{U_{_T}'(\hat{\chi})}{U_{_T}'(x)}
			& = \lim_{x \rightarrow \infty}
			\frac{c_1\hat{\chi}^{-\alpha}+c_2\hat{\chi}^{-\beta}}{c_1x^{-\alpha}+c_2x^{-\beta}}
			= A^{-\alpha} \text{ or }
			B^{-\beta}, \\
			\lim_{x \rightarrow \infty} \frac{U_{_T}''(\hat{\chi})}{U_{_T}''(x)}
			& = \lim_{x \rightarrow \infty}
			\frac{c_1 \alpha \hat{\chi}^{-\alpha-1} + c_2 \beta \hat{\chi}^{-\beta-1}}{c_1 \alpha x^{-\alpha-1} + c_2 \beta x^{-\beta-1}} 
			= A^{-\alpha-1} \text{ or }
			B^{-\beta-1}.
		\end{align*}
		Thus under both Cases of Assumption \ref{Assume_UT}, we find
		\begin{align}
			U_{_T}'(\hat{\chi}) & \sim U_{_T}'(x), \label{sim10}\\
			U_{_T}''(\hat{\chi}) & \sim U_{_T}''(x). \label{sim11}
		\end{align}
		Form \eqref{key1'}, \eqref{key2'} and \eqref{key2''}, we observe that 
		\begin{align}
			\Big(\frac{U_{_T}'(x)}{U_{_T}''(x)}\Big)' & \sim 1, \label{sim(1)} \\
			\Big(\frac{\big[U_{_T}'(x) \big]^2}{U_{_T}''(x)}\Big)' & \sim U_{_T}'(x), \label{sim(3)} \\
			\Big(\frac{\big[U_{_T}'(x) \big]^2}{U_{_T}''(x)}\Big)'' & \sim U_{_T}''(x). \label{sim(4)}
		\end{align}
		We compare all the terms of \eqref{key1''} with the last 3 terms of \eqref{key2''}. Then by \eqref{sim10} and \eqref{sim(4)}, we have
		\begin{align}\label{sim(2)}
			U_{_T}'(\hat{\chi}) \Big(\frac{U_{_T}'(x)}{U_{_T}''(x)}\Big)'' \sim U_{_T}''(x).
		\end{align}
		We apply \eqref{sim11},  \eqref{sim(1)}, \eqref{sim(4)} and \eqref{sim(2)} into equation \eqref{U1xx} and find $ U^{(1)}_{xx} \sim U_{_T}''(x) $, that is
		\begin{align}\label{sim7}
			\frac{U_{xx}^{(1)}}{U_{_T}''(x)} \sim 1.
		\end{align}
		Similarly, we apply \eqref{sim10},  \eqref{sim(1)} and \eqref{sim(3)} into equation \eqref{U1x} and find 
		\begin{align}\label{sim13}
			U_x^{(1)} \sim U_{_T}'(x),
		\end{align}
		then apply \eqref{sim(3)} into equation \eqref{U_xy} and find 
		\begin{align}\label{sim14}
			U_{xy}^{(1)} \sim U_{_T}'(x).
		\end{align}
		Finally we apply \eqref{sim13} and \eqref{sim14} into equation \eqref{sim1} and find
		\begin{align}\label{sim9}
			\frac{U_{_T}'(x)}{U_{_T}''(x)} U_x^{(1)} \sim f(x) \text{ and } \frac{U_{_T}'(x)}{U_{_T}''(x)} U_{xy}^{(1)} \sim f(x).
		\end{align}
		
		In summary, we apply \eqref{sim1}, \eqref{sim2}, \eqref{sim4}, \eqref{sim3}, \eqref{sim5}, \eqref{sim7} and \eqref{sim9} into \eqref{U^2} and prove $ u_i^{(2)}(x,y) \sim f(x) $.
	\end{proof}
	\begin{lemma} \label{A.1}
		Let $ x = X_t $ be the process of total wealth defined by \eqref{Xt}, under any admissible portfolio. Let $ g(x) = \ln(x) +1 $ under Case 1 of Assumption \ref{Assume_UT}, and $ g(x) = x^{1-\alpha} + x^{1-\beta} $, $ \alpha, \beta > 0 $ and $ \alpha, \beta \neq 1 $, under Case 2 of Assumption \ref{Assume_UT}. Then $ \big\{ g(X_{t_n}) \big\}_{n=1}^\infty $ is uniformly bounded by an integrable random variable, where $ \{t_n \}_{n=1}^\infty \subset [t, T] $ is a sequence of stopping times s.t. $ t_n \leq t_{n+1} $ and $ t_n \rightarrow T $.
	\end{lemma}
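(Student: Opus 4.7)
The plan is to dominate $g(X_{t_n})$ by applying the It\^o formula for jump-diffusions to $g(X_t)$, expressing the result as an initial value plus a finite-variation drift plus two stochastic-integral martingales, and then bounding each piece by a random variable that is independent of $n$ and integrable. The choice of $g$ is made exactly so that its first two derivatives pair with the three integrability conditions in Definition~\ref{def_1}(3): in Case~1 take $g(x)=\ln x+1$, so that $g'(x)=1/x$ matches condition~3(i) with $r=1$ and the jump increment $g(X_s+\sigma_s\pi_s\gamma_1)-g(X_s)$ matches condition~3(ii); in Case~2 split $g$ as the sum of $x^{1-\alpha}$ and $x^{1-\beta}$ and handle each summand separately with condition~3(i) at $r\in\{\alpha,\beta\}$ and condition~3(iii).

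Apply It\^o to $g_r(X_t)$, where $g_r$ denotes either $\ln$ or $x^{1-r}/(1-r)$, to obtain
\begin{align*}
g_r(X_{t_n}) = g_r(x) + D_{t_n} + M^{(1)}_{t_n} + M^{(2)}_{t_n},
\end{align*}
with $M^{(1)}$ the Brownian integral having integrand $\sigma_s\pi_s g_r'(X_s)=\sigma_s\pi_s X_s^{-r}$, and $M^{(2)}$ the compensated-jump integral with integrand $g_r(X_s+\sigma_s\pi_s\gamma_1)-g_r(X_s)$. By conditions 3(i)--(iii) of Definition~\ref{def_1} both integrands lie in $L^2$, so each $M^{(k)}$ is a genuine $L^2$-martingale; Doob's $L^2$ maximal inequality then yields $\sup_{s\in[t,T]}|M^{(k)}_s|\in L^2\subset L^1$, hence an $n$-independent integrable upper bound for $|M^{(k)}_{t_n}|$. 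The drift $D_{t_n}$ splits into three parts: a linear $\int_t^{t_n}\sigma_s\pi_s\lambda_s g_r'(X_s)\,ds$ controlled by Cauchy--Schwarz together with 3(i) and boundedness of $\lambda$ from Assumption~\ref{aiai}; a quadratic $\tfrac12\int_t^{t_n}\sigma_s^2\pi_s^2 g_r''(X_s)\,ds$ controlled directly by 3(i) because $g_r''(x)=-r x^{-r-1}$; and the compensator term $\int_t^{t_n}\!\int_\mathbb{R}\bigl[g_r(X_s+\sigma_s\pi_s\gamma_1)-g_r(X_s)-g_r'(X_s)\sigma_s\pi_s\gamma_1\bigr]\nu(d\zeta)\,ds$, bounded using $|D_{t_n}|\le\int_t^T|\dot D_s|\,ds$.

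The main obstacle is the compensator term: admissibility supplies an $L^2(\nu)$ bound on the jump increment, whereas the compensator is integrated only once against $\nu$, so one must convert a squared quantity into a first-power bound. The cleanest route is Taylor's theorem, which writes the compensator integrand as $\tfrac12 g_r''(\xi_s)[\sigma_s\pi_s\gamma_1]^2$ for an intermediate $\xi_s$ between $X_s$ and $X_s+\sigma_s\pi_s\gamma_1$; controlling $\xi_s^{-r-1}$ pathwise via the strict positivity of wealth (condition~2 of Definition~\ref{def_1}), and combining this with the $L^2$-type bounds 3(ii)--(iii), yields integrability of the compensator. Once this is in hand, setting $Z:=|g_r(x)|+\int_t^T|\dot D_s|\,ds+\sum_k\sup_{s\in[t,T]}|M^{(k)}_s|$ and summing over the one or two relevant values of $r$ produces the dominating integrable random variable claimed in the lemma.
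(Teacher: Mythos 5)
Your overall skeleton is the same as the paper's: apply It\^o's formula to $\ln X$ (Case 1) and to $X^{1-r}$ with $r\in\{\alpha,\beta\}$ (Case 2), split into initial value, drift (including the $\nu(d\zeta)\,ds$ compensator), Brownian integral and compensated-jump integral, control the two stochastic integrals through Doob's maximal inequality plus the It\^o isometries, and feed everything back into the integrability conditions of Definition~\ref{def_1}(3). The genuine gap is in the step you yourself single out as the main obstacle, the compensator term. Writing its integrand by Taylor's theorem as $\tfrac12 g_r''(\xi_s)\,[\sigma_s\pi_s\gamma_1(s,\zeta)]^2$ and then ``controlling $\xi_s^{-r-1}$ pathwise via the strict positivity of wealth'' does not work: condition 2 of Definition~\ref{def_1} only asserts $X_s>0$; it gives no quantitative lower bound on the intermediate point $\xi_s$ lying between $X_s$ and $X_s+\sigma_s\pi_s\gamma_1(s,\zeta)$, which can come arbitrarily close to $0$ as $(s,\zeta)$ ranges over $[t,T]\times\mathbb{R}$, so $g_r''(\xi_s)$ (e.g.\ $-\xi_s^{-2}$ in the logarithmic case) admits no pathwise bound. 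Worse, the Taylor step trades the quantity that admissibility actually controls --- the squared increments $[\,g_r(X_s+\sigma_s\pi_s\gamma_1)-g_r(X_s)\,]^2$ integrated against $\nu(d\zeta)\,ds\,d\mathbb{P}$ --- for $[\sigma_s\pi_s\gamma_1(s,\zeta)]^2$, and nothing in Assumption~\ref{aiai} or Definition~\ref{def_1} gives $\int_\mathbb{R}\gamma_1^2(s,\zeta)\,\nu(d\zeta)<\infty$; recall the paper explicitly works with $\int_\mathbb{R}\nu(d\zeta)=\infty$. So ``combining with 3(ii)--(iii)'' cannot restore integrability of your Taylor form: those conditions bound a different object.

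By contrast, the paper never leaves the admissibility quantities: after inserting the pathwise suprema of the two stochastic integrals and taking expectations, it dominates the whole expression (compensator included) by a constant times $1+\mathbb{E}\int_0^T\sigma_s^2\pi_s^2X_s^{-2r}\,ds+\mathbb{E}\int_0^T\int_\mathbb{R}[\,g_r(X_s+\sigma_s\pi_s\gamma_1)-g_r(X_s)\,]^2\,\nu(d\zeta)\,ds$, which are exactly the integrals required finite in Definition~\ref{def_1}(3), with Doob's inequality and the It\^o isometries handling the martingale parts --- just as in your treatment of $M^{(1)}$ and $M^{(2)}$, which is fine. To repair your argument you should bound the compensator integrand directly in terms of the squared increment together with $\sigma_s^2\pi_s^2X_s^{-2r}$ (the admissibility integrands), rather than through a second-order Taylor remainder evaluated at an uncontrolled intermediate point; as written, that step fails.
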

	\begin{proof}
		We denote $ \sigma_t = \sigma(Y_t) $, $ \pi_t = \pi(t, X_t, Y_t) $ and $ \lambda_t = \lambda(Y_t) $, and let $ \tau \in [t, T] $.  For $ g(x) = \ln(x) +1 $, it is identical to show that $ \big\{ \ln(X_{t_n}) \big\}_{n=1}^\infty $ is uniformly bounded by an integrable random variable. We apply It\^{o}'s formula to $ \ln(X_\tau) $ and obtain
		\begin{align*}
			\ln(X_\tau) = \
			& \ln(x) +  \int_t^\tau \Big[ \frac{\sigma_s \pi_s \lambda_s}{X_s}
			- \frac{\sigma^2_s \pi^2_s}{2X_s^2} \Big] ds
			+ \int_t^\tau  \frac{\sigma_s \pi_s}{X_s} dW^{(1)}_s \\
			& + \int_t^\tau \int_\mathbb{R} \Big[ \ln \big( X_s + \sigma_s\pi_s\gamma_1(s,\zeta) \big) - \ln(X_s) - \frac{\sigma_s \pi_s}{X_s}\gamma_1(s,\zeta) \Big] \nu(d\zeta) ds \\
			& + \int_t^\tau \int_\mathbb{R} \Big[ \ln \big( X_s + \sigma_s\pi_s\gamma_1(s,\zeta) \big) - \ln(X_s) \Big] \tilde{N}(ds,d\zeta).
		\end{align*}
		With taking expectation, we choose some constant $ c_1 $ such that 
		\begin{align*}
			\mathbb{E} \big( \ln(X_\tau) \big) \leq\
			&  c_1 \Bigg[
			1 +  \mathbb{E} \bigg( \int_0^T \frac{\sigma_s^2 \pi_s^2 }{X_s^2} ds \bigg)
			+ \mathbb{E} \bigg( \bigg[ \sup_\tau \int_t^\tau  \frac{\sigma_s \pi_s}{X_s} dW^{(1)}_s \bigg]^2 \bigg) \\
			& \quad \
			+ \mathbb{E} \bigg( \int_0^T \int_\mathbb{R} \Big[ \ln \big( X_s + \sigma_s\pi_s\gamma_1(s,\zeta) \big) - \ln(X_s) \Big]^2 \nu(d\zeta) ds \bigg) \\
			& \quad \
			+ \mathbb{E} \bigg( \bigg[ \sup_\tau \int_t^\tau \int_\mathbb{R} \Big[ \ln \big( X_s + \sigma_s\pi_s\gamma_1(s,\zeta) \big) - \ln(X_s) \Big] \tilde{N}(ds,d\zeta) \bigg]^2 \bigg)
			\Bigg].
		\end{align*}
		By Doob's martingale maximal inequalities, we have
		\begin{align*}
			\mathbb{E} \big( \ln(X_\tau) \big) \leq \
			& c_1 \Bigg[
			1 +  \mathbb{E} \bigg( \int_0^T \frac{\sigma_s^2 \pi_s^2 }{X_s^2} ds \bigg)
			+ 4 \mathbb{E} \bigg( \bigg[  \int_0^T  \frac{\sigma_s \pi_s}{X_s} dW^{(1)}_s \bigg]^2 \bigg) \\
			& \quad \
			+ \mathbb{E} \bigg(  \int_0^T \int_\mathbb{R} \Big[ \ln \big( X_s + \sigma_s\pi_s\gamma_1(s,\zeta) \big) - \ln(X_s) \Big]^2 \nu(d\zeta) ds  \bigg) \\
			& \quad \
			+ 4 \mathbb{E} \bigg(  \bigg[  \int_0^T \int_\mathbb{R} \Big[ \ln \big( X_s + \sigma_s\pi_s\gamma_1(s,\zeta) \big) - \ln(X_s) \Big] \tilde{N}(ds,d\zeta) \bigg]^2 \bigg)
			\Bigg].
		\end{align*}
		Then by It\^{o} isometries, we have
		\begin{align*}
			\mathbb{E}\big( \ln(X_\tau) \big) \leq\
			& c_1 \Bigg[
			1 +  5\mathbb{E} \bigg( \int_0^T \frac{\sigma_s^2 \pi_s^2 }{X_s^2} ds \bigg) \\
			& \quad \
			+ 5\mathbb{E} \bigg( \int_0^T \int_\mathbb{R} \Big[ \ln \big( X_s + \sigma_s\pi_s\gamma_1(s,\zeta) \big) - \ln(X_s) \Big]^2 \nu(d\zeta) ds \bigg)
			\Bigg].
		\end{align*}
		Finally by Definition \ref{def_1}, we prove that $ \big\{ \ln(X_{t_n}) \big\}_{n=1}^\infty $ is uniformly bounded by an integrable random variable.
		
		For $ g(x) = x^{1-\alpha} + x^{1-\beta} $, it is identical to show that $ \{ X^{1-r}_{t_n} \}_{n=1}^\infty $, $ r > 0 $ and $ r \neq 1, $ is uniformly bounded by an integrable random variable. We apply It\^{o}'s formula to $ X^{1-r}_\tau $ and obtain
		\begin{align*}
			X^{1-r}_\tau = \
			& x^{1-r} + [1-r] \bigg[ \int_t^\tau \Big[ \frac{\sigma_s \pi_s \lambda_s}{X^{r}_s} 
			- \frac{r\sigma^2_s \pi^2_s}{2X^{r+1}_s} \Big] ds
			+ \int_t^\tau \frac{\sigma_s \pi_s}{X^{r}_s} dW^{(1)}_s \bigg] \\
			& + \int_t^\tau \int_\mathbb{R} \Big[ \big[ X_s + \sigma_s\pi_s\gamma_1(s,\zeta) \big]^{1-r} - X^{1-r}_s - [1-r]\frac{\sigma_s \pi_s}{X^{r}_s}\gamma_1(s,\zeta) \Big] \nu(d\zeta) ds \\
			& + \int_t^\tau \int_\mathbb{R} \Big[ \big[ X_s + \sigma_s\pi_s\gamma_1(s,\zeta) \big]^{1-r} - X^{1-r}_s \Big] \tilde{N}(ds,d\zeta).
		\end{align*}
		With taking expectation, we choose some constant $ c_2 $ such that 
		\begin{align*}
			\mathbb{E} (X^{1-r}_\tau) \leq\
			& c_2 \Bigg[
			1 +  \mathbb{E} \bigg( \int_0^T \frac{\sigma^2_s \pi^2_s}{X^{2r}_s} ds \bigg)
			+ \mathbb{E} \bigg( \bigg[ \sup_\tau \int_t^\tau  \frac{\sigma_s \pi_s}{X^{r}_s} dW^{(1)}_s \bigg]^2  \bigg) \\
			& \quad \
			+ \mathbb{E} \bigg(  \int_0^T \int_\mathbb{R} \Big[ \big[ X_s + \sigma_s\pi_s\gamma_1(s,\zeta) \big]^{1-r} - X^{1-r}_s \Big]^2 \nu(d\zeta) ds \bigg) \\
			& \quad \
			+ \mathbb{E} \bigg( \bigg[ \sup_\tau \int_t^\tau \int_\mathbb{R} \Big[ \big[ X_s + \sigma_s\pi_s\gamma_1(s,\zeta) \big]^{1-r} - X^{1-r}_s \Big] \tilde{N}(ds,d\zeta) \bigg]^2 \bigg)
			\Bigg].
		\end{align*}
		By Doob's martingale maximal inequalities, we have
		\begin{align*}
			\mathbb{E} (X^{1-r}_\tau) \leq\
			& c_2 \Bigg[
			1 +  \mathbb{E} \bigg( \int_0^T \frac{\sigma^2_s \pi^2_s}{X^{2r}_s} ds \bigg)
			+ 4 \mathbb{E} \bigg( \bigg[ \int_0^T  \frac{\sigma_s \pi_s}{X^{r}_s} dW^{(1)}_s \bigg]^2 \bigg)\\
			& \quad \
			+ \mathbb{E} \bigg( \int_0^T \int_\mathbb{R} \Big[ \big[ X_s + \sigma_s\pi_s\gamma_1(s,\zeta) \big]^{1-r} - X^{1-r}_s \Big]^2 \nu(d\zeta) ds \bigg)\\
			& \quad \
			+ 4 \mathbb{E} \bigg( \bigg[ \int_0^T \int_\mathbb{R} \Big[ \big[ X_s + \sigma_s\pi_s\gamma_1(s,\zeta) \big]^{1-r} - X^{1-r}_s \Big] \tilde{N}(ds,d\zeta) \bigg]^2 \bigg)
			\Bigg].
		\end{align*}
		Then by It\^{o} isometries, we have
		\begin{align*}
			\mathbb{E} (X^{1-r}_\tau) \leq\
			& c_2 \Bigg[
			1 + 5 \mathbb{E} \bigg( \int_0^T \frac{\sigma^2_s \pi^2_s}{X^{2r}_s} ds \bigg) \\
			& \quad \
			+ 5 \mathbb{E} \bigg( \int_0^T \int_\mathbb{R} \Big[ \big[ X_s + \sigma_s\pi_s\gamma_1(s,\zeta) \big]^{1-r} - X^{1-r}_s \Big]^2 \nu(d\zeta) ds \bigg)
			\Bigg].
		\end{align*}
		Finally by Definition \ref{def_1}, we prove that $ \{ X^{1-r}_{t_n} \}_{n=1}^\infty $ is uniformly bounded by an integrable random variable.
	\end{proof}
	
	\begin{lemma} \label{A.2}
		If the total wealth $ \hat{X}_t $ is defined by
		\begin{align*}
			d\hat{X}_t = \sigma(Y_t) \tilde{\pi}(t,\hat{X}_t,Y_t) \Big[ \lambda(Y_t)dt + dW_t^{(1)} + \int_\mathbb{R} \gamma_1(t,\zeta) \tilde{N}(dt,d\zeta) \Big], 
		\end{align*}
		where
		\begin{align*}
			\tilde{\pi}(t,x,y)
			= \frac{ -\lambda(y)\hat{U}_x(t,x,y) - \rho a(y) \hat{U}_{xy}(t,x,y)}
			{\sigma(y)\hat{U}_{xx}(t,x,y)}
			+ \frac{\hat{U}_x(t,x,y) }
			{\sigma(y)\hat{U}_{xx}(t,x,y)} \int_\mathbb{R}  \gamma_1(t, \zeta) \nu(d\zeta),
		\end{align*}
		with $ \hat{U}(t,x,y) $ given by \eqref{U_hat}, and $ \hat{X}_t = x, Y_t = y $. Then the optimal portfolio $ \tilde{\pi}(t,x,y)  $ is admissible under Assumption \ref{Assume_UT}.
	\end{lemma}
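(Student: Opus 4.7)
The plan is to first use the asymptotic relations proved in Lemma \ref{A.0} to establish that the candidate portfolio $\tilde\pi(t,x,y)$ is of linear growth in $x$, and then to exploit this structural bound to verify the three integrability conditions in Definition \ref{def_1} via standard moment estimates for geometric Lévy-type SDEs.

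First I would compute $\hat{U}_x = U_{_T}'(x) + (T-t)U^{(1)}_x$, $\hat{U}_{xx} = U_{_T}''(x) + (T-t)U^{(1)}_{xx}$, and $\hat{U}_{xy} = (T-t)U^{(1)}_{xy}$. From \eqref{sim13}, \eqref{sim7}, and \eqref{sim14} one has $U^{(1)}_x \sim U_{_T}'(x)$, $U^{(1)}_{xx} \sim U_{_T}''(x)$, and $U^{(1)}_{xy} \sim U_{_T}'(x)$, so for $T-t$ small enough the leading-order asymptotics of $\hat{U}_x$ and $\hat{U}_{xx}$ are those of $U_{_T}'$ and $U_{_T}''$. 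Combining these with the uniform bounds on $\lambda$, $a$, and $1/\sigma$ from Assumption \ref{aiai}, I would deduce that there exists a constant $K>0$, uniform in $(t,y)\in[0,T]\times\mathbb{R}$, with $|\tilde\pi(t,x,y)| \leq K\,|U_{_T}'(x)/U_{_T}''(x)|$. Under both cases of Assumption \ref{Assume_UT} an elementary calculation shows that $|U_{_T}'(x)/U_{_T}''(x)| \leq C x$ for some $C>0$, giving $|\tilde\pi(t,x,y)|\leq K'\,x$ uniformly in $(t,y)$.

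With this linear bound in hand, the wealth equation becomes of geometric Lévy-exponential type,
\[
d\hat{X}_t = \hat{X}_t\bigl[\bar\mu(t,Y_t)\,dt + \bar\sigma(t,Y_t)\,dW^{(1)}_t + \int_\mathbb{R}\bar\gamma(t,Y_t,\zeta)\,\tilde N(dt,d\zeta)\bigr],
\]
where the drift and diffusion coefficients are bounded and the jump coefficient is controlled pathwise by $K'\sigma(Y_t)\gamma_1(t,\zeta)$. Positivity of $\gamma_1$ in the jump-distribution examples of Section \ref{sec4} gives $1+\bar\gamma>0$, so $\hat{X}_t$ admits a Doléans-Dade representation and is strictly positive; standard moment bounds for such exponentials then yield $\mathbb{E}(\sup_{t\leq T}\hat{X}_t^p)<\infty$ for every $p\in\mathbb{R}$ for which $\int_\mathbb{R}|\gamma_1(t,\zeta)|^p\,\nu(d\zeta)$ is locally integrable. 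Plugging the homogeneity $\tilde\pi \sim K'\hat{X}_t$ into the three integrands of Definition \ref{def_1} factors a power of $\hat{X}_t$ out of each one, reducing each condition to a product of an $\hat{X}_t$-moment integral (finite by the previous step) and a $\nu$-integral of a bounded function of $\zeta$.

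The main obstacle is the verification of the two jump conditions in Definition \ref{def_1}, since they require both uniform positivity of $1+K'\sigma_t\gamma_1$ (so that $\ln(\cdot)$ and $(\cdot)^{1-r}$ are well-defined) and $\nu$-integrability of terms like $\ln(1+K'\sigma_t\gamma_1)^2$ and $[(1+K'\sigma_t\gamma_1)^{1-r}-1]^2$. The former reduces to the standing positivity and boundedness of $\gamma_1$ in the model; the latter uses that both expressions are of order $\gamma_1^2$ near zero and of polynomial order in $\gamma_1$ for large values, matched to the moment conditions implicit in the Lévy-measure specification. The most delicate bookkeeping is the case split under Case 2 of Assumption \ref{Assume_UT}, where admissibility must hold for general $r>0$ and the asymptotic constants in the bound $|U_{_T}'/U_{_T}''|\leq Cx$ differ between the regimes $x\to 0^+$ and $x\to\infty$; I would handle the two regimes independently using the mixed-power structure of $U_{_T}$ and conclude by a triangle inequality.
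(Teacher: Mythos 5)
Your opening step coincides with the paper's: both arguments use \eqref{sim13}, \eqref{sim7} and \eqref{sim14} to get $\hat U_x \sim U_{_T}'(x)$, $\hat U_{xx}\sim U_{_T}''(x)$, $\hat U_{xy}\sim U_{_T}'(x)$, hence $\sigma(y)\tilde\pi \sim U_{_T}'(x)/U_{_T}''(x)$, and then the elementary computation $|U_{_T}'(x)/U_{_T}''(x)|\le Cx$ under both cases of Assumption \ref{Assume_UT} to obtain linear growth. After that the routes diverge. The paper disposes of the conditions of Definition \ref{def_1} by citation and inheritance: progressive measurability from continuity in $t$, the Brownian-integrability condition by quoting the proof of Lemma A.1 in \cite{Kumar}, and the two jump conditions by asserting that $\hat U$, as an approximation of $U$, inherits the growth conditions \eqref{growth} of the verification theorem. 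You instead verify the conditions directly: cast the wealth SDE in geometric L\'evy form with bounded relative coefficients, derive moments of $\hat X$ via Dol\'eans-Dade/maximal-inequality estimates, and factor powers of $\hat X$ out of the three integrands. Your route is more self-contained and, for the jump conditions, more convincing than the paper's inheritance argument (which is essentially an assertion), at the cost of more bookkeeping; it also mirrors the technique the paper itself uses to prove Lemma \ref{A.1}. Do add the (trivial) progressive-measurability point, which you omit.

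Two caveats. First, Assumption \ref{aiai} does not bound $1/\sigma$, so your uniform bound $|\tilde\pi|\le K'x$ is not available as stated; but the $\sigma$ in the denominator of $\tilde\pi$ cancels against the $\sigma$ that always multiplies $\tilde\pi$ in Definition \ref{def_1} and in the wealth SDE, so the bound should be stated for $\sigma(y)\tilde\pi$ (exactly as the paper bounds $|\sigma(y)\tilde\pi_t/x|\le c$), after which your argument goes through unchanged. Second, your positivity step is not justified under the lemma's hypotheses: positivity of $\gamma_1$ in the Section \ref{sec4} examples does not give $1+\bar\gamma>0$, because the portfolio coefficient can be negative (the paper's own Section \ref{sec6} example has $\tilde\pi=-1.00605\,x$), so what is needed is that the relative jump $\sigma\tilde\pi\gamma_1/\hat X$ stays above $-1$; if you invoke a Dol\'eans-Dade representation you must make that condition explicit. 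The paper is equally terse here (it simply asserts strict positivity of $\hat X_t$), and likewise both you and the paper use asymptotic relations stated only as $x\to\infty$ limits to claim bounds uniform over all $x>0$, so these are shared weaknesses rather than defects peculiar to your argument — but they are the points to tighten if you want your version to be fully rigorous.
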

	\begin{proof}
		We denote $ \sigma_t = \sigma(Y_t) $ and $ \tilde{\pi}_t = \tilde{\pi}(t, \hat{X}_t, Y_t) $.
		Because $ \tilde{\pi}_t $ is continuous with respect to $ t $, therefore it is progressively measurable. 
		
		We apply \eqref{sim7}, \eqref{sim13} and \eqref{sim14} into equation \eqref{U_hat}, and obtain $ \hat{U}_x(t,x,y) \sim U_{_T}'(x) $, $ \hat{U}_{xy}(t,x,y) \sim U_{_T}'(x) $ and $ \hat{U}_{xx}(t,x,y) \sim U_{_T}''(x) $, which imply $ \tilde{\pi}_t \sim U_{_T}'(x)/U_{_T}''(x) $.
		Under Case 1 of Assumption \ref{Assume_UT}, we have
		\begin{align*}
			\lim_{x \rightarrow \infty} \frac{U_{_T}'(x)}{xU_{_T}''(x)}
			= \lim_{x \rightarrow \infty} \frac{x^{-1}}{x[-x^{-2}]} = -1.
		\end{align*}
		Under Case 2 of Assumption \ref{Assume_UT}, we have
		\begin{align*}
			\lim_{x \rightarrow \infty} \frac{U_{_T}'(x)}{xU_{_T}''(x)}
			= \lim_{x \rightarrow \infty} \frac{c_1x^{-\alpha}+c_2x^{-\beta}}{x[-c_1\alpha x^{-\alpha-1} - c_2\beta x^{-\beta-1}]} 
			= -\frac{1}{\alpha} \text{ or } -\frac{1}{\beta}.
		\end{align*}
		Thus we find $ \big| \sigma(y) \tilde{\pi}_t/x \big| \leq c $ with some constant $ c $. By definition of  $ \hat{X}_t $, $ \tilde{\pi}_t $ yields that $ \hat{X}_t $ is strictly positive.
		
		The proof of Lemma A.1 in \cite{Kumar} had already given that
		\begin{align*}
			\mathbb{E} \bigg( \int_0^T \frac{\sigma^2_t \tilde{\pi}^2_t}{\hat{X}^{2r}_t} dt \bigg) < \infty, \quad r>0.
		\end{align*}
		We find in Theorem \ref{thm} that $ \hat{U}(t,x,y) $ is the approach of $ U(t,x,y) $. The function $ U(t,x,y) $ satisfies the growth conditions \eqref{growth}. So does the $ \hat{U}(t,x,y) $. Thus we have
		\begin{align*}
			\mathbb{E} \bigg( \int_0^T \int_\mathbb{R} \Big[ \ln \big( \hat{X}_t + \sigma_t\tilde{\pi}_t\gamma_1(t,\zeta)\big) - \ln(\hat{X}_t) \Big]^2 \nu(d\zeta) dt \bigg) < \infty,
		\end{align*}
		under Case 1 of Assumption \ref{Assume_UT}, and 
		\begin{align*}
			\mathbb{E} \bigg( \int_0^T \int_\mathbb{R} \Big[ \big[ \hat{X}_t + \sigma_t\tilde{\pi}_t\gamma_1(t,\zeta) \big]^{1-r} - \hat{X}^{1-r}_t \Big]^2 \nu(d\zeta) dt \bigg) < \infty, \quad r>0 \text{ and } r \neq 1,
		\end{align*}
		under Case 2 of Assumption \ref{Assume_UT}.
		By Definition \ref{def_1}, we prove that the optimal portfolio $ \tilde{\pi}(t,x,y)  $ is admissible under Assumption \ref{Assume_UT}.
	\end{proof}

\end{document}